\title{Improved Analysis of RANKING for Online Vertex-Weighted Bipartite Matching}
\author{Billy Jin\thanks{Supported in part by an NSERC fellowship PGSD3-532673-2019 and NSF grant CCF-1908517. Address: School of Operations Research and Information Engineering, Cornell University, Ithaca, NY, 14853, USA. Email: {\tt bzj3@cornell.edu}.}\\Cornell University \and David P.\ Williamson\thanks{Supported in part by NSF grant CCF-1908517. Address: School of Operations Research and Information Engineering, Cornell University, Ithaca, NY, 14853, USA.  Email: {\tt davidpwilliamson@cornell.edu}.}\\Cornell University}
\date{}
\begin{document}


\maketitle

\begin{abstract}
In this paper, we consider the online vertex-weighted bipartite matching problem in the random arrival model.  We consider the generalization of the RANKING algorithm for this problem introduced by Huang, Tang, Wu, and Zhang \cite{HTWZ19}, who show that their algorithm has a competitive ratio of 0.6534.  We show that assumptions in their analysis can be weakened, allowing us to replace their derivation of a crucial function $g$ on the unit square with a linear program that computes the values of a best possible $g$ under these assumptions on a discretized unit square.  We show that the discretization does not incur much error, and show computationally that we can obtain a competitive ratio of 0.6629. To compute the bound over our discretized unit square we use parallelization, and still needed two days of computing on a 64-core machine.  Furthermore, by modifying our linear program somewhat, we can show computationally an upper bound on our approach of 0.6688; any further progress beyond this bound will require either further weakening in the assumptions of $g$ or a stronger analysis than that of Huang et al.
\end{abstract}


\section{Introduction}
\label{sec:intro}
In the maximum bipartite matching problem, we are given as input a bipartite graph $G=(U,V,E)$  such that each edge $(u,v) \in E$ has $u \in U$ and $v \in V$.  A  set $F \subseteq E$ of edges is a {\em matching} if there is at most one edge of $F$ incident to each vertex $u \in U$ and $v \in V$. The goal is to find a matching of maximum cardinality.  This problem has been well-studied and is one of the fundamental problems in combinatorial optimization (see, for example, Schrijver \cite[Chapter 16]{Schrijver03}).

In a classic paper from 1990, Karp, Vazirani, and Vazirani \cite{KVV90} introduce an online version of this problem and the RANKING algorithm for it.  In their online version of the problem, the vertices $V$ are known to the algorithm in advance, while the vertices of $U$ are introduced one at a time; we refer to the vertices of $V$ as the {\em offline} vertices and those of $U$ as the {\em online} vertices.  The algorithm maintains a matching $F$, initially empty.   As each vertex of $U$ arrives, the edges incident to $U$ are also revealed to the algorithm.  Once a vertex of $U$ arrives, the algorithm must either choose an edge incident to $U$ to add to $F$ or decide not to add an edge incident to $U$ to the matching $F$.  These choices are irrevocable: no edge incident to $U$ may be added at any later point in time.  In the RANKING algorithm, the algorithm initially chooses a random permutation $\pi$ of the offline vertices $V$; when a new vertex $u \in U$ arrives, the algorithm adds edge $(u,v)$ to 
the matching that maximizes $\pi(v)$ over the vertices $v \in V$ that do not have any edge of $F$ already incident (i.e.\ the {\em unmatched} vertices of $V$ incident to $u$), if such a vertex exists, otherwise it leaves $u$ unmatched.  Karp, Vazirani, and Vazirani prove that this algorithm achieves a {\em competitive ratio} of at least $1 - \frac{1}{e}$; that is, the algorithm finds a matching whose expected cardinality is at least $1 - \frac{1}{e}$ times the size of the maximum matching in $G$. They further show that this ratio is tight; that is, there are instances of the problem such that no online algorithm can achieve a better competitive ratio.

Since this work, there have been many simplifications of the original analysis (e.g.\ Birnbaum and Mathieu \cite{BM08}; Devanur, Jain, and Kleinberg \cite{DJK13}), proposed changes in the online model, and extensions to more general matching problems.  Of interest to us in this paper are the {\em random arrival model},
%
proposed by Goel and Mehta \cite{GM08}, and the maximum {\em vertex-weighted} online matching problem, introduced by Aggarwal, Goel, Karande, and Mehta \cite{AGKM11}.  In the random arrival model, the online vertices of $U$ arrive in an order given by a random permutation. Goel and Mehta show that the greedy algorithm attains a competitive ratio of $1-\frac1e$ in the random arrival model. Later, Karande, Mehta, and Tripathi \cite{KMT11} and Mahdian and Yan \cite{MY11} show that the RANKING algorithm has competitive ratio strictly better than $1 - \frac{1}{e}$ in this model, with Mahdian and Yan giving a competitive ratio of 0.696.  In the vertex-weighted version of the problem, the offline vertices $v \in V$ have weight $w_v \geq 0$, and the goal is to find a matching $F$ that maximizes the total weight of the matched vertices in $V$ (that is, the vertices in $V$ that have an incident edge in $F$).  Aggarwal et al.\ show that a generalization of RANKING achieves a $1 - \frac{1}{e}$ competitive ratio for the vertex-weighted version of the problem (with adversarial arrivals).  Devanur, Jain, and Kleinberg \cite{DJK13} later interpreted the Aggarwal et al.\ algorithm as follows.  Each offine vertex $v \in V$ draws a value $y_v$ from [0,1] uniformly at random; when a new vertex $u \in U$ arrives, we add edge $(u,v)$ to matching $F$ for the unmatched $v$ (if any) that maximizes $w_v(1 - g(y_v))$, where $g(y) = e^{y-1}$.  

Huang, Tang, Wu, and Zhang \cite{HTWZ19} studied the combination of these two models, the maximum vertex-weighted online matching problem in the random arrival model.  Drawing on the ideas of Devanur et al., they proposed the following further generalization of the RANKING algorithm.  In addition to having each offline vertex $v \in V$ draw a value $y_v$ from [0,1], since the online vertices arrive in random order, they propose having each online vertex $u \in U$ draw a value $y_u \in [0,1]$ uniformly at random, and have the offline vertices arrive in order of nondecreasing $y_u$.  When a new vertex $u \in U$ arrives, we add edge $(u,v)$ to the matching $F$ for the unmatched $v$ (if any) that maximizes $w_v(1 - g(y_v, y_u))$, for a function $g$ with certain properties.  Huang et al.\ assume that $g(x,y) = \frac{1}{2}(h(x) + 1 - h(y))$ for $h:[0,1] \rightarrow [0,1]$, and end up choosing $h(x) = \min(1,\frac{1}{2}e^x)$ to achieve a competitive ratio of 0.6534, beating the $1-\frac{1}{e} \approx 0.632$ competitive ratio achieved by Aggarwal et al.\ in the adversarial arrival model.

We build upon the work of Huang et al.\ to give a competitive ratio of 0.6629 for the maximum vertex-weighted online matching problem in the random arrival model.  We begin by showing that several assumptions Huang et al.\ make about the form of $g(x,y)$ needed for the analysis of their generalization of RANKING can be relaxed.  Instead, we can make several weaker assumptions about the form of $g(x,y)$.  These assumptions can be encoded in a linear program that allows us to produce the best possible piecewise-affine function $g: [0,1]^2 \rightarrow [0,1]$ under these assumptions for any given discretization of $[0,1]^2$.  

We then need to compute the competitive ratio by finding a point in $[0,1]^2$ where $g$ reaches a certain minimum of a complicated function of $g$ given by Huang et al.  To do this, we show that the error in the competitive ratio achieved by restricting ourselves to finding the minimum in the set of discretized points is linear in the size of the discretization, so we can restrict ourselves to checking just the points in this set if we are willing to tolerate some small error.  However, even checking all the discretized points becomes computationally infeasible if we discretize the square finely enough so that the error is tolerable.  
We note that the checking is easily parallelizable, and we wrote our code to use all the cores of the machine on which it is run. Even so, we still needed two days of a 64-core, 64GB machine on Amazon's EC2 platform to achieve our competitive ratio of 0.6629.

Because we use a linear program to find the function $g$, we can also use a slight modification of it to find an upper bound on the best possible competitive ratio obtainable using the Huang et al.\ analysis with our weakened assumptions on $g$.  We modify the linear program so that any function $g$ with our weakened assumptions is feasible, and modify the objective function so that it gives an upper bound on the ratio obtained via the Huang et al.\ analysis.  Solving the linear program results in an upper bound of 0.6688.  Thus any further improvement in the competitive ratio will require either further weakening in the assumptions of $g$ or a stronger analysis than that of Huang et al.

Our paper is organized as follows.  In Section \ref{sec:background}, we recap the argument of Huang et al.\ that we will use.  In Section \ref{sec:hyp}, we introduce the weaker assumptions on the function $g$ that we will use, and prove that the arguments of Huang et al.\ continue to hold under these weaker assumptions so that we can still use their bound on the competitive ratio under these weaker assumptions.  In Section \ref{sec:lp}, we introduce the LP that will define our function $g$; we show how to define a piecewise-affine function $g$ from the LP solution, and we show that the assumptions we need on $g$ hold for this LP-defined function.  In Section \ref{sec:error}, we provide a bound on the error we incur in the competitive ratio by only checking the Huang et al.\ bound at discrete points of the unit square.  In Section \ref{sec:computation}, we explain the computation that was used to obtain our competitive ratio of 0.6629. Section \ref{sec:upper} explains how we modify our linear program to obtain an upper bound on the competitive ratio that is attainable via the Huang et al.\ analysis with our weakened assumptions on $g$. We conclude in Section \ref{sec:conc}.

\section{Background}
\label{sec:background}
As stated in the introduction, we assign each offline vertex $v \in V$ a value $y_v$ from [0,1] chosen uniformly at random, and following Huang et al.\ we assume that each online vertex $u \in U$ also has a value $y_u$ from [0,1] chosen uniformly at random, and that the online vertices arrive in nondecreasing order of their $y_u$ value.  The variant of the RANKING algorithm for the problem uses a function $g:[0,1]^2 \rightarrow [0,1]$ that is increasing in the first argument and decreasing in the second.  When an online vertex $u \in U$ arrives, it is matched to the unmatched neighbor $v \in V$ that maximizes $w_v(1-g(y_v,y_u))$.

The analysis of this algorithm by Huang et al.\ \cite{HTWZ19} follows that of Devanur, Jain, and Kleinberg \cite{DJK13}.  It considers the linear programming relaxation of the vertex-weighted bipartite matching problem and its dual linear program, shown below, with the primal on the left and the dual on the right.
\begin{align*}
\mbox{Max }  \sum_{(u,v) \in E} w_v x_{uv}  & & \mbox{Min }  \sum_{u \in U} \alpha_u + \sum_{v \in V} \alpha_v \\
\mbox{s.t. } \sum_{v: (u,v) \in E} x_{uv} \leq 1 &  \qquad\forall u \in U & \mbox{s.t. } \alpha_u + \alpha_v \geq w_v & \qquad \forall (u,v) \in E \\
\sum_{u:(u,v) \in E} x_{uv} \leq 1 &\qquad \forall v \in V & \alpha_u, \alpha_v \geq 0 & \qquad \forall u \in U, v \in V. \\
x_{uv} \geq 0 &  \qquad \forall (u,v) \in E.
\end{align*}
The goal of the analysis is to find a set of nonnegative variables $\alpha$, whose values may depend on the random $y$ values, such that $\sum_{(u,v) \in F}  w_v = \sum_{u \in V} \alpha_u + \sum_{v \in V} \alpha_v$ and $E_y[\alpha_u + \alpha_v] \geq \beta \cdot w_v$ for all $(u,v) \in E$. (Here, $F$ is the set of edges in the matching found by the algorithm.) Given the two conditions, it is possible to define a dual solution that is a factor of $\beta$ away from the total weight of the matched edges, implying a competitive ratio of $\beta$.  Whenever the algorithm adds a matching edge $(u,v)$ to $F$, it defines $\alpha_u = w_v \cdot g(y_v,y_u)$ and $\alpha_v = w_v(1 - g(y_v,y_u))$, ensuring that the first condition is met.  

The main result of Huang et al.\ is the following.
\begin{lemma}[Lemma 4.1 \cite{HTWZ19}] 
\label{lem:4.1}
Suppose that $g(x,y) = \frac{1}{2}(h(x) + 1-h(y))$, for some increasing function $h: [0,1] \to [0,1]$ that satisfies $h'(x) \leq h(x)$. Then for any $u \in U$ and $v \in V$ such that $(u,v) \in E$, $$\frac{1}{w_v}E_y[\alpha_u + \alpha_v] \geq \min_{0 \leq \gamma, \tau \leq 1} f(\gamma,\tau)$$ for $$f(\gamma,\tau) = \biggl\{(1-\tau)(1-\gamma) + (1-\tau)\int_0^\gamma g(x, \tau) dx +
\int_0^\tau \min_{\theta \leq \gamma} \left\{(1-g(\theta, y)) + \int_0^\theta g(x, y)dx + \int_\theta^\gamma g(x, \tau) dx\right\} dy \biggr\}.$$
Thus, the competitive ratio of RANKING is at least $\min_{0 \leq \gamma, \tau, \leq 1} f(\gamma, \tau)$. 
\end{lemma}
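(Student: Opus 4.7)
The plan is to prove a pointwise bound
\[ \frac{1}{w_v} E_y[\alpha_u + \alpha_v \mid y_v = \gamma,\ y_u = \tau] \geq f(\gamma,\tau), \]
where the expectation is over the ranks of the remaining vertices. Since $(y_v, y_u)$ is marginally uniform on $[0,1]^2$, taking the unconditional expectation and then lower-bounding $f(\gamma,\tau)$ by its minimum yields the stated inequality. This approach follows the Devanur--Jain--Kleinberg primal-dual template, but it must treat $v$ (via its offline rank $y_v$) and $u$ (via its arrival time $y_u$) on equal footing, since both are now random.

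The first step I would take is the standard RANKING monotonicity lemma in its weighted form: with the ranks of all vertices other than $v$ frozen, the set of $y_v$ values for which $v$ ends up matched is a prefix of $[0,1]$. Consequently, for each time $y \in [0,\tau]$ there is a threshold $\theta(y)$ such that $v$ would be matched by some online vertex arriving by time $y$ precisely when $y_v \leq \theta(y)$. I would then split into two cases. In case~(A), $v$ is still unmatched when $u$ arrives; the RANKING selection rule, together with $u$'s option of matching $v$ directly, forces $\alpha_u \geq w_v g(\gamma,\tau)$, and averaging against the event $\theta(\tau)<\gamma$ produces the $(1-\tau)(1-\gamma)$ piece of $f$. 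In case~(B), $v$ has been matched at some earlier time $y \leq \tau$ to an online vertex $u'$; chaining the selection inequality at time $y$ (where $v$ was chosen over its alternative) with the one at time $\tau$ (where $u$'s actual partner was chosen over $v$) through an interpolating threshold $\theta \leq \gamma$ yields
\[ (1 - g(\theta,y)) + \int_0^\theta g(x,y)\,dx + \int_\theta^\gamma g(x,\tau)\,dx \]
as a lower bound on the combined $\alpha_u+\alpha_v$ contribution. Because $\theta$ depends on the hidden randomness, one takes its minimum, and integrating against the uniform density of $y \in [0,\tau]$ produces the third term of $f$. The $\alpha_v$ contribution from online vertices arriving after $u$ (in the subcase where $v$ remains unmatched at time $\tau$) accounts for the $(1-\tau)\int_0^\gamma g(x,\tau)\,dx$ piece by a similar averaging argument.

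The hard part will be case~(B): verifying that the two-step exchange argument assembles cleanly into the stated integrand, and that the hypothesis $h'(x) \leq h(x)$---which translates to a partial derivative bound of the shape $\partial_x g(x,y) \lesssim g(x,y)$---is exactly what keeps those inequalities tight enough under subsequent integration over the uniform arrival time of $u'$ and over the threshold $\theta$. Without this hypothesis, substituting $g(\theta,y)$-type quantities by their interpolated counterparts loses a multiplicative factor that would prevent the integrand from matching the stated form. Once this step is in place, recombining the contributions from cases~(A) and~(B) and passing to the minimum over $(\gamma,\tau) \in [0,1]^2$ is routine bookkeeping that completes the proof.
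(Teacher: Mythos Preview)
Your conditioning is inverted, and this is a genuine gap. In the Huang--Tang--Wu--Zhang argument (whose structure this paper reproduces in the proof of Theorem~\ref{thm:main_bound}), one first fixes the ranks of all vertices \emph{other than} $u$ and $v$; this determines two marginal thresholds $\gamma$ and $\tau$ --- roughly, $v$ would be matched in the run without $u$ iff $y_v < \gamma$, and symmetrically for $u$ and $\tau$. The remaining expectation is then over the two uniforms $y_v, y_u$, and one shows $E_{y_v, y_u}[\alpha_u + \alpha_v] \geq w_v\, f(\gamma,\tau)$ for these \emph{determined} $\gamma,\tau$; taking the worst case over $(\gamma,\tau)$ then gives the lemma. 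You do the opposite: you fix $y_v = \gamma$, $y_u = \tau$ and average over the remaining vertices. Under that conditioning the term $(1-\gamma)(1-\tau)$ has no interpretation as a probability, and your claim that ``averaging against the event $\theta(\tau) < \gamma$ produces the $(1-\tau)(1-\gamma)$ piece'' cannot be right: $\theta(\tau)$ depends on the instance and the other ranks, so its law is not a function of $(\gamma,\tau)$ alone. Likewise the integrals $\int_0^\gamma$ and $\int_0^\tau$ in $f$ arise from integrating the \emph{still-random} $y_v$ and $y_u$ over those intervals, which is meaningless once you have pinned them to specific values.

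The actual decomposition splits $E_{y_v,y_u}[\alpha_u+\alpha_v]$ by the quadrants of $(y_v, y_u)$ relative to the fixed $(\gamma,\tau)$: Lemma~3.3 of \cite{HTWZ19} handles $y_v \geq \gamma,\ y_u \geq \tau$, contributing $(1-\gamma)(1-\tau)$; Lemma~3.4 bounds $E[\alpha_v\cdot\mathbf{1}(y_v<\gamma) + \alpha_u\cdot\mathbf{1}(y_v<\gamma,\,y_u>\tau)]$ via a threshold $\beta^{-1}(y_v)$, yielding the $(1-\tau)\int_0^\gamma g(x,\tau)\,dx$ piece; and Lemma~3.5 bounds $E[\alpha_u\cdot\mathbf{1}(y_u<\tau) + \alpha_v\cdot\mathbf{1}(y_u<\tau,\,y_v>\gamma)]$ via a threshold $\theta(y_u)$, yielding the inner-minimum integral. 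The hypothesis $h'(x) \leq h(x)$ --- equivalently $\partial_x g \leq g$ --- enters \emph{only} in part~3 of Lemma~3.5, where it shows that $(1-g(\theta,x)) + (\theta-\gamma)\,g(\theta,x)$ is nondecreasing in $\theta$ on $[\gamma,1]$, hence bounded below by its value at $\theta=\gamma$. Your account of the role of this hypothesis (``keeping inequalities tight enough under subsequent integration over the uniform arrival time of $u'$'') does not match that usage and suggests you have not yet located the step where the condition is actually needed.
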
  
Huang et al.\ show that by taking $h(x) = \min(1, \frac{1}{2}e^x)$, they can prove that $f(\gamma,\tau) > 1 - \frac{1}{2}\ln 2 \approx 0.6534$ for all $0 \leq \gamma, \tau \leq 1$, attaining their claimed competitive ratio by the reasoning above.

\section{Relaxing Assumptions}
\label{sec:hyp}
Huang et al.\ assume that $g(x, y) = \frac12(1+h(x)-h(y))$, for some increasing function $h: [0,1] \to [0,1]$ that satisfies $h'(x) \leq h(x)$. This is a strong assumption and gives several nice properties of $g$ which are useful in the analysis. We relax this assumption and do not constrain $g$ to satisfy this condition. Instead, we replace this condition by several weaker conditions. This allows us to search over a wider class of functions $g$ when trying to maximize the bound in Lemma \ref{lem:4.1}. However, to leverage their result, we must show that the conclusion of Lemma \ref{lem:4.1}  still holds for all $g$ that satisfy these weaker conditions.   We prove the following.

\begin{theorem}
\label{thm:main_bound}
Let $g$ be a function obeying the following conditions.
\begin{enumerate}
    \item $g(x, y): [0,1]^2 \to [0,1]$ is continuous, 
    \item $g(x, y)$ is increasing in $x$ and decreasing in $y$,
    \item $\frac{\partial g(x, y)}{\partial x} \leq g(x, y)$,
     \footnote{\label{foot:diff}We use notation for partial derivatives, but the result also holds for non-differentiable functions, if we use subgradients, etc. In particular, the result holds for the piecewise-affine functions $g$ we obtain from solving the LP in Section \ref{sec:lp}. To keep the exposition simple, we will continue using partial derivative notation throughout the paper.}
    \item $\frac{\partial g(x, y)}{\partial y} \geq g(x, y) - 1$, and
    \item for all $x, y, y'$ with $y' > y$, we have
    $$g(1, y) - g(x, y) \geq g(1, y') - g(x, y')$$
\end{enumerate}
Then Lemma 4.1 in \cite{HTWZ19} still holds, and the competitive ratio of the RANKING algorithm is at least
\begin{equation}
\min_{0 \leq \gamma, \tau \leq 1} \biggl\{(1-\tau)(1-\gamma) + (1-\tau)\int_0^\gamma g(x, \tau) dx    +
\int_0^\tau \min_{\theta \leq \gamma} \left\{(1-g(\theta, y)) + \int_0^\theta g(x, y)dx + \int_\theta^\gamma g(x, \tau) dx\right\} dy \biggr\} \label{cr}
\end{equation}
\end{theorem}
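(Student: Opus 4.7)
The plan is to reproduce the proof of Lemma \ref{lem:4.1} from \cite{HTWZ19} line by line, and at each step where the specific form $g(x,y) = \frac{1}{2}(1 + h(x) - h(y))$ is invoked, verify that one of our weaker conditions (1)--(5) already suffices. Recall the skeleton of the Huang et al.\ argument: for a fixed edge $(u^*, v^*)$, condition on $y_{v^*} = \gamma$ and $y_{u^*} = \tau$, and then lower bound $\frac{1}{w_{v^*}} E[\alpha_{u^*} + \alpha_{v^*} \mid y_{v^*} = \gamma, y_{u^*} = \tau]$ by the integrand in (\ref{cr}). This bound splits naturally: the ``late-arriving'' contribution from $y > \tau$ produces the $(1-\tau)(1-\gamma) + (1-\tau)\int_0^\gamma g(x,\tau)\,dx$ piece, while the ``early-arriving'' contribution from $y \le \tau$ produces the $\int_0^\tau \min_\theta\{\cdots\}\,dy$ piece. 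My task is to reproduce each of these pieces under the weaker hypotheses.

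I would then go through Huang et al.'s intermediate lemmas and check which properties of $g$ they actually use. Continuity (our (1)) and monotonicity of $g$ in each coordinate (our (2)) are invoked whenever a unique threshold rank must be defined or a swap comparison made. The two partial-derivative inequalities $\partial_x g \le g$ and $\partial_y g \ge g - 1$, which Huang et al.\ derive from $h' \le h$, appear in the Birnbaum--Mathieu / Devanur--Jain--Kleinberg style perturbation arguments and in computing the expected dual accumulated by the online side. Since we simply posit these as conditions (3) and (4), each intermediate lemma relying only on these properties carries over unchanged.

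The main obstacle, and the reason for condition (5), is a step in Huang et al.\ where the separability of $g$ into $h(x) + (1 - h(y))$ is used essentially: they compare the dual contribution from matching to a vertex of rank $x$ at time $y$ against the contribution from matching to the best-ranked vertex (rank $1$) at the same time $y$, and separability makes $g(1,y) - g(x,y)$ independent of $y$. Condition (5) is the precise one-sided weakening saying this gap is (weakly) decreasing in $y$, which preserves the direction of the inequality Huang et al.\ need in order to lower bound the competitive ratio. The hard part will be to isolate this step, rewrite the relevant equality as an inequality using (5), and check that the subsequent term rearrangement, inner minimization over $\theta$, and integration over $y$ still produce the bound (\ref{cr}) verbatim.

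Finally, because the $g$ produced by the LP of Section \ref{sec:lp} is only piecewise-affine, I would note, as in footnote \ref{foot:diff}, that (3) and (4) should be interpreted as subgradient or integrated inequalities at kink points; both interpretations suffice since the Huang et al.\ proof only invokes (3) and (4) inside integrals where the fundamental theorem of calculus applies to Lipschitz $g$. Equivalently, one can mollify any piecewise-affine $g$ satisfying (1)--(5) by smooth $g$'s satisfying the same conditions and pass to the limit using continuity of the right-hand side of (\ref{cr}) in the $L^\infty$ norm on $g$.
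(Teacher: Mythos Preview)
Your plan is exactly the paper's approach: it walks through Huang et al.'s intermediate results (their Claim 2.1, Fact 3.1, and Lemmas 3.1, 3.3, 3.4, 3.5) and checks which of conditions (1)--(5) each step actually needs, with condition (5) entering only in the threshold-monotonicity step of their Lemma 3.1 (showing that $\theta(x)=1$ implies $\theta(x')=1$ for all $x'\ge x$), which is then used indirectly in Lemma 3.4. One wrinkle you did not anticipate: Huang et al.'s Lemma 3.5 is written using the symmetry $g(a,b)+g(b,a)=1$ (a consequence of their separable form), so when you drop that assumption you must first rewrite every occurrence of $g(y_u,y_v)$ there as $1-g(y_v,y_u)$ before the argument goes through; after that substitution, conditions (2) and (3) suffice.
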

\begin{proof}
The result in Lemma 4.1 of \cite{HTWZ19} follows entirely from facts proved in their Lemmas 3.3, 3.4, and 3.5.  We show that these lemmas continue to hold given the conditions on $g$ above.  

\begin{claim}
Claim 2.1 in \cite{HTWZ19} still holds.
\end{claim}
\begin{proof}
Claim 2.1 in \cite{HTWZ19} is precisely condition 4.
\end{proof}

\begin{claim}
Fact 3.1 in \cite{HTWZ19} still holds.
\end{claim}
\begin{proof}
This fact follows from condition 2, that is, $g(x, y)$ is increasing in $x$ and decreasing in $y$.
\end{proof}

\begin{claim}
Lemma 3.1 in \cite{HTWZ19} still holds.
\end{claim}
\begin{proof}
This lemma follows mostly from condition 2, and that part of the proof still holds. The only part of the proof which doesn't directly follow from condition 2 is the assertion that if $\theta(x) = 1$ for some $x \in [0,1]$, then $\theta(x') = 1 $ for all $x' \geq x$. This part mimics the proof in \cite{HTWZ19}, except we use condition 5 instead of the stronger assumption $g(x, y) = \frac12(1+h(x)-h(y))$. 

Suppose for the sake of contradiction that $\theta(x) = 1$ but $\theta(x') < 1$ for some $x' > x$.  Since $\theta(x') < 1$, this means when $y_u = x'$ and $y_v = 1$, we have that $v$ is unmatched when $u$ arrives, and $u$ matched to some vertex $z \neq v$. Thus, $w_z(1 - g(y_z, x')) > w_v(1-g(1, x'))$. 

Now, consider what happens when $u$ arrives at time $x$. Since $\theta(x) = 1$, this means when $y_u = x$ and $y_v = 1$, we have that $v$ is matched to $u$. On the other hand, $z$ is an unmatched neighbour of $u$ if $u$ arrives at time $x$. (This is because $z$ is unmatched if $u$ arrives at the later time $x'$.) Moreover, choosing $z$ induces utility 
\begin{align*}
w_z(1 - g(y_z, x)) 
&= w_z(1-g(y_z, x')) \cdot \frac{1-g(y_z, x)}{1-g(y_z, x')} \\
&> w_v(1 - g(1, x'))\cdot \frac{1-g(y_z, x)}{1-g(y_z, x')} \\
&\geq w_v(1 - g(1, x')) \cdot \frac{1 - g(1, x)}{1 - g(1, x')} \\
&= w_v(1 - g(1, x))
\end{align*}
The first inequality is because $u$ matched to $z$ over $v$ when $y_u = x'$. The second inequality is true because
\begin{align*}
    \frac{1-g(y_z, x)}{1-g(y_z, x')}
    &= \frac{1-g(1, x) + g(1, x) - g(y_z, x)}{1-g(1, x') + g(1, x') - g(y_z, x')}
\end{align*}
If we let $a = 1-g(1, x)$, $b=1-g(1, x')$, $c = g(1, x) - g(y_z, x)$, and $d = g(1, x') - g(y_z, x')$, then the expression above is $\frac{a+c}{b+d}$. Observe that for positive $a, b, c, d$, whenever $a \leq b$ and $c \geq d$, we have $\frac{a+c}{b+d} \geq \frac{a}{b}$. The second inequality follows directly from this observation. Note that $a \leq b$ follows from condition 2 (monotonicity of $g$), and $c \geq d$ is condition 5.

Thus we have shown that $w_z(1-g(y_z, x)) > w_v(1-g(1,x))$, which says that $u$ is better off choosing $z$ than $v$. This is a contradiction. 
\end{proof}
\begin{remark}
This is the only place where condition 5 is used. 
\end{remark}

\begin{claim}
Lemma 3.3 in \cite{HTWZ19} still holds.
\end{claim}
\begin{proof}
This lemma relies on nothing but the definition of $\gamma$ and $\tau$. 
\end{proof}

\begin{claim}
Lemma 3.4 in \cite{HTWZ19} still holds.
\end{claim}
\begin{proof}
The proof of Lemma 3.4 has three parts:
\begin{enumerate}
    \item In the first part, they show that for any fixed $y_v = x < \gamma$, we have $\alpha_v \geq w_v\cdot g(x, \beta^{-1}(x))$ for all $y_u \in [0,1]$.
    \item Next, they show that $\frac{1}{w_v}\cdot \E_{y_u}\left[\alpha_v\cdot \one(y_v < \gamma) + \alpha_u\cdot \one(y_v<\gamma, y_u > \tau)\right] \geq f(x, \beta^{-1}(x))$, where
    $$f(x, \beta^{-1}(x)) := g(x, \beta^{-1}(x)) + \max\{0, \beta^{-1}(x)-\tau\}\cdot(1-g(x, \beta^{-1}(x)))$$
    \item Finally, they show that $f(x, \beta^{-1}(x)) \geq g(x, \tau)$. 
\end{enumerate} The proofs of 1 and 2 follow entirely from Lemma 3.1 in \cite{HTWZ19}, which in turn follows from our conditions 2 and 5. More specifically, the proof of 1 only uses the monotonicity of $g$ (condition 2), whereas the proof of 2 also relies on the fact if $\theta(x) = 1$ then $\theta(x') = 1$ for all $x' \geq x$, so it uses condition 5 as well.  

The proof of 3 follows unchanged, and uses condition 2 and condition 4. For completeness we write the argument here:
\begin{itemize}
    \item If $\beta^{-1}(x) < \tau$, then $f(x, \beta^{-1}(x)) = g(x, \beta^{-1}(x)) \geq g(x, \tau)$, since $\frac{\partial g(x, y)}{\partial y} \leq 0$. 
    \item If $\beta^{-1}(x) \geq \tau$, then $f(x, \beta^{-1}(x)) = g(x, \beta^{-1}(x)) + (\beta^{-1}(x) - \tau)(1 - g(x, \beta^{-1}(x))$. Observe that $f(x, \beta^{-1}(x))$ is non-decreasing in its second argument, since
    $$\frac{\partial f(x, \beta^{-1}(x))}{\partial \beta^{-1}(x)} = \underbrace{\frac{\partial g(x, \beta^{-1}(x))}{\partial \beta^{-1}(x)}  + 1 - g(x, \beta^{-1}(x))}_{\text{$\geq 0$ by condition 4}} - \underbrace{(\beta^{-1}(x) - \tau)}_{\geq 0}\cdot\underbrace{\frac{\partial g(x, \beta^{-1}(x))}{\partial \beta^{-1}(x)}}_{\text{$\leq 0$ by condition 2}} \geq 0.$$
    Thus, $f(x, \beta^{-1}(x)) \geq f(x, \tau) = g(x, \tau)$. 
\end{itemize}
\end{proof}

\begin{claim}
Lemma 3.5 in \cite{HTWZ19} still holds.
\end{claim}
\begin{proof}
In this lemma, the authors write $w_v\cdot g(y_u, y_v)$ to denote the gain, $\alpha_u$, of the online vertex $u$. This relies on their assumption that $g(y_v, y_u) + g(y_u, y_v) = 1$, because the RANKING algorithm actually sets $\alpha_u = w_v\cdot(1 - g(y_v, y_u))$.  Since we are not assuming $g(y_v, y_u) + g(y_u, y_v) = 1$, we need to replace all occurrences of the form $g(y_u, y_v)$ in their lemma with $1- g(y_v, y_u)$. We show that everything in the lemma goes through with this replacement. With this replacement, the statement of the lemma becomes
$$\E\left[\alpha_u\cdot\one(y_u < \tau) + \alpha_v \cdot \one(y_u<\tau, y_v > \gamma)\right] \geq w_v\cdot\int_0^\tau (1-g(\gamma, x)) dx.$$

The proof of Lemma 3.5 has three parts (again, the statements below are rewritten from their proof, replacing all occurrences of the form $g(y_u, y_v)$ with $1-g(y_v, y_u)$):
\begin{enumerate}
    \item In the first part, they show that for any fixed $y_u = x < \tau$, we have $\alpha_u \geq w_v\cdot (1-g(\theta(y_u), y_u))$
    \item Next, they show that for fixed $y_u = x < \tau$, $\frac{1}{w_v}\cdot \E_{y_v}\left[\alpha_u\cdot\one(y_u< \tau) + \alpha_v\cdot\one(y_u<\tau, y_v>\gamma)\right] \geq f(\theta(x), x)$, where
    $$f(\theta(x), x) := (1-g(\theta(x), x)) + \max\{0, \theta(x)-\gamma\}\cdot g(\theta(x), x)$$
    \emph{Note: In \cite{HTWZ19}, they use the notation $f(x, \theta(x))$. We choose to use $f(\theta(x), x)$ here in order to make it consistent with the notation $g(\theta(x), x)$.}
    \item Finally, they show that $f(\theta(x), x) \geq 1 - g(\gamma, x)$. 
\end{enumerate}

The proofs of 1 and 2 follow entirely from Lemma 3.1 in \cite{HTWZ19}, which in turn follows from our conditions 2 and 5. Actually, the proofs of 1 and 2 only rely on the part of Lemma 3.1 which follow from the monotonicity of $g$. (i.e. The parts of Lemma 3.1 except the statement that $\theta(x) = 1$ for some $x$ implies $\theta(x') = 1$ for all $x' \geq x$.) Thus 1 and 2 follow entirely from our condition 2, and do not need to use condition 5. 

The proof of 3 uses, in addition, condition 3. The proof in \cite{HTWZ19} follows through unchanged, but for completeness we write the argument here:
\begin{itemize}
    \item If $\theta(x) \leq \gamma$, then $f(\theta(x), x) = 1 - g(\theta(x), x) \geq 1 - g(\gamma, x)$, because $g$ is increasing in its first argument. (Condition 2.)
    \item If $\theta(x) > \gamma$, then $f(\theta(x), x)$ is non-decreasing in its first argument, since
    $$\frac{\partial f(\theta(x), x)}{\partial \theta(x)} = \underbrace{-\frac{\partial g(\theta(x), x)}{\partial \theta(x)} + g(\theta(x), x)}_{\text{$\geq 0$ by condition 3}} + \underbrace{(\theta(x) - \gamma)}_{\geq 0}\cdot\underbrace{\frac{\partial g(\theta(x), x)}{\partial \theta(x)}}_{\text{$\geq 0$ by condition 2}} \geq 0 $$
    Therefore $f(\theta(x), x) \geq f(\gamma, x) = 1 - g(\gamma, x)$. 
\end{itemize}

\end{proof}

\end{proof}
From now on, we will refer to the five conditions in Theorem \ref{thm:main_bound} as conditions 1-5.

\section{LP Formulation}
\label{sec:lp}
To find a function $g$ that maximizes the bound in Theorem \ref{thm:main_bound}, we discretize $[0,1]^2$ into an $n \times n$ grid for a sufficiently large positive integer $n$, and write an LP to search for the values of $g$ on this discretized grid.

In Section \ref{sec:lp_constraints}, we formulate the conditions 1-5, which are the conditions that any feasible $g$ must satisfy, as constraints in the LP. Next, in Section \ref{sec:lp_objective}, we formulate the expression in Theorem \ref{thm:main_bound}, which is the bound we are trying to maximize, as an LP objective. Finally, in Section \ref{sec:lp_extend}, we will see how to extend the values of $g$ on the discretized $n \times n$ grid, which is what the LP returns, to a function $g$ defined on the entire unit square. 

\subsection{Formulating the Constraints}
\label{sec:lp_constraints}
 In this section, we show how to formulate the conditions 1-5 as constraints in the LP. 

Fix a positive integer $n$ and let $x_i = y_i = \frac{i}{n}$, for $i= 0, 1,  \ldots, n$. Our LP will have variables $g(x_i, y_j)$, the values of $g$ on the discretized unit square. Next, we encode the conditions 1-5 as constraints of the LP. Below are the conditions, and their corresponding LP constraints:
\begin{enumerate}
    \item \underline{$g(x, y): [0,1]^2 \to [0,1]$ and $g$ is continuous.} The corresponding LP constraints are $0 \leq g(x_i, y_j) \leq 1$, for all $i,j = 0, 1, \ldots, n$. Note that we do not include any constraints to enforce the continuity of $g$, since the aim of the LP is to determine the value of $g$ at a discretized set of points.
    \item \underline{$g(x, y)$ is increasing in $x$ and decreasing in $y$.} The corresponding LP constraints are 
    \begin{itemize}
        \item $g(x_i, y_j) \leq g(x_k, y_j)$ for all $0 \leq i, j, k \leq n$ with $i \leq k$;
        \item $g(x_i, y_j) \geq g(x_i, y_l)$ for all $0 \leq i, j, l \leq n$ with $j \leq l$.
    \end{itemize} 
    \item \underline{$\frac{\partial g(x, y)}{\partial x} \leq g(x, y)$}. We discretize this constraint to create the following LP constraints:
    \begin{itemize}
    \item $\frac{g(x_{i+1}, y_j) - g(x_i, y_j)}{x_{i+1} - x_i} \leq g(x_i, y_{j+1}) \quad \text{for all $0 \leq i, j \leq n-1$}$
    \item $\frac{g(x_{i+1}, y_n) - g(x_i, y_n)}{x_{i+1} - x_i} \leq g(x_i, y_n) \quad \text{for all $0 \leq i \leq n-1$}$
    \end{itemize}
    
    \begin{remark*}
    It is more natural to encode the constraints as $\frac{g(x_{i+1}, y_j) - g(x_i, y_j)}{x_{i+1} - x_i} \leq g(x_i, y_{j})$ for all $0 \leq i \leq n-1$, $0 \leq j \leq n$. Since $g(x_i, y_{j+1}) \leq g(x_i, y_j)$, our constraints are even stronger. We do this because when we extend $g$ from its discretized values to a function defined on the entire unit square, this slightly stronger version of the constraint will be needed to show that the extended function also satisfies the condition. 
    \end{remark*}
    
    \item \underline{$\frac{\partial g(x, y)}{\partial y} \geq g(x, y) - 1$}. As with the previous constraint, the corresponding LP constraints are
    \begin{itemize}
        \item $\frac{g(x_i, y_{j+1})-g(x_i, y_j)}{y_{j+1}-y_j} \geq g(x_{i+1}, y_j) - 1$, for all $0 \leq i, j \leq n-1$
        \item $\frac{g(x_n, y_{j+1})-g(x_n, y_j)}{y_{j+1}-y_j} \geq g(x_n, y_j) - 1$, for all $0 \leq  j \leq n-1$. 
    \end{itemize}
    \item \underline{For all $x, y, y'$ with $y' > y$,
    $g(1, y) - g(x, y) \geq g(1, y') - g(x, y')$}. The corresponding LP constraints are
    $$g(x_n, y_j) - g(x_i, y_j) \geq g(x_n, y_l) - g(x_i, y_l) \quad \text{for all $0 \leq i,j \leq n$ with $l > j$.}$$
\end{enumerate}

\subsection{Formulating the Objective}
\label{sec:lp_objective}
The expression we are trying to maximize is given in (\ref{cr}).
To formulate this approximately as an LP objective, we 
\begin{enumerate}
    \item Approximate the $\min_{0\leq \gamma, \tau \leq 1}$ and $\min_{\theta \leq \gamma}$ expressions by minimizing over a finite set of values, and 
    \item Approximate the integrals by finite sums. 
\end{enumerate}
We begin by letting $f(\gamma, \tau)$ be the expression inside the outermost $\min$, so that the bound is equal to $\min_{0 \leq\gamma, \tau \leq 1} f(\gamma, \tau)$. Since we cannot check all values of $\gamma$ and $\tau$, we approximate it by $\min_{0 \leq i, j \leq n} f(x_i, y_j)$.  We write this as a linear objective using the standard trick of introducing a dummy variable $t$, and writing
\begin{align*}
    \max \quad &t \\
    \text{s.t.} \quad &t \leq f(x_i, y_j) \quad \text{for all $0 \leq i, j \leq n$.}
\end{align*}
Next, we must write constraints to model $f(x_i, y_j)$. We replace the inner $\min_{\theta \leq x_i}$ by a minimum over the discretized grid: $\min_{\theta \leq x_i}$ becomes $\min_{x_k \leq x_i}$. For each integral that appears in the expression for $f$, we replace it by a left Riemann sum. For example, the integral $\int_0^{x_i} g(x, y_j) dx$ would be replaced by $\frac{1}{n} \sum_{k=0}^{i-1} g(x_k, y_j)$. 

With these approximations, we can approximate $f(x_i, y_j)$ as a linear function $\tilde{f}(x_i, y_j)$ of the $g(x_i, y_j)$ variables:
\begin{align*}
&f(x_i, y_j) \approx \tilde{f}(x_i, y_j) =  (1-x_i)(1-y_j) + (1-y_j)\cdot \frac1n \sum_{k=0}^{i-1}g(x_k, y_j) \\
&+ \frac{1}{n}\sum_{l=0}^{j-1} \min_{k \leq i} \left\{(1-g(x_k, y_l)) + \frac{1}{n}\sum_{d=0}^{k-1} g(x_d, y_l)  + \frac1n \sum_{d=k}^{i-1} g(x_d, y_j)\right\}
\end{align*}

Hence, to summarize this section and Section \ref{sec:lp_constraints}, the full linear program we use the compute the values of $g$ on the discretized $n \times n$ grid is as follows:
\begin{align*}
    \max \quad &t \\
    \text{s.t.} \quad &t \leq \tilde{f}(x_i, y_j) \quad \text{for all $0 \leq i, j \leq n$} \\
    &\text{and such that $g$ satisfies the constraints from Section \ref{sec:lp_constraints}.}
\end{align*}

\subsection{Extending the Discretized Function to the Unit Square}
\label{sec:lp_extend}
The linear program gives us values of $g$ on any given discretization of $[0,1]^2$, but to use the bound in Theorem \ref{thm:main_bound} we must 
\begin{enumerate}
    \item Extend $g$ to be defined on the entire unit square, and
    \item Show that this extended function satisfies Conditions 1-5. 
\end{enumerate}
To extend $g$ from its values on an $n \times n$ grid to a function defined on the entire unit square, we triangulate the $n \times n$ grid as shown in Figure \ref{fig:triangulation}.

\begin{figure}
\begin{minipage}[b]{0.30\textwidth}
    \centering
    \begin{tikzpicture}[scale=0.6]
    \draw[step=1.0,black] (0, 0) grid (4, 4);
    \draw[thin] (0, 1) -- (1, 0);
    \draw[thin] (0, 2) -- (2, 0);
    \draw[thin] (0, 3) -- (3, 0);
    \draw[thin] (0, 4) -- (4, 0);
    \draw[thin] (1, 4) -- (4, 1);
    \draw[thin] (2, 4) -- (4, 2);
    \draw[thin] (3, 4) -- (4, 3);
    \draw[->, thick] (0,0)--(4.5,0) node[right]{$x$};
    \draw[->, thick] (0,0)--(0,4.5) node[above]{$y$};
    \end{tikzpicture}
    \caption{Triangulating the grid. Here, $n = 4$.}
    \label{fig:triangulation}
    \end{minipage} \hfill
\begin{minipage}[b]{0.30\textwidth}
    \centering    
  \begin{tikzpicture}[scale=0.6]
    \vertex (x1) at (0, 0)  [label=below left:{$(a_1, b_1)$}] {};
    \vertex (x2) at (0, 3)  [label={$(a_2, b_2)$}] {};
    \vertex (x3) at (3, 0)  [label=below right:{$(a_3, b_3)$}] {};
    \vertex (p) at (1, 1) [label=below:{$(x, y)$}] {};
    \draw (x1) -- (x2) -- (x3) -- (x1);
    \end{tikzpicture}
    \caption{Extending the function values to a point inside a triangle.}
    \label{fig:triangle}
\end{minipage} \hfill
\begin{minipage}[b]{0.30\textwidth}
       \centering
        \begin{tikzpicture}[scale=0.6]
    \vertex (x1) at (0, 0)  [label= left:{$(x_i, y_j)$}] {};
    \vertex (x2) at (3, 0)  [label= right:{$(x_{i+1}, y_j)$}] {};
    \vertex (y1) at (3, -3)  [] {};
    \vertex (y2) at (0, 3) [] {};
    \draw (x1) -- (x2) -- (y1) -- (x1);
    \draw(x1) -- (x2) -- (y2) -- (x1);
    
    \node at (1.5, -0.25) {$l$};
    \node at (1,1) {$T_2$};
    \node at (2,-1) {$T_1$};
    \end{tikzpicture}
        \caption{Illustration of the proof of Condition 3.}
        \label{fig:cond3}
\end{minipage}
\end{figure}
For a point $(x, y)$ on a gridpoint, its function value is given by the LP. For any other point $(x, y)$, we define $g(x, y)$ to be a convex combination of the function values on the three vertices of the triangle containing $(x, y)$. More precisely, suppose $(x, y)$ is contained in the triangle with vertices $(a_1, b_1), (a_2, b_2)$, and $(a_3, b_3)$, where the $(a_i, b_i)$ are gridpoints. (See Figure \ref{fig:triangle}.) Then we define 
$$g(x, y) = \lambda_1 \cdot g(a_1, b_1) + \lambda_2 \cdot g(a_2, b_2) + \lambda_3 \cdot g(a_3, b_3),$$
where $\lambda_1, \lambda_2, \lambda_3$ are the unique coefficients that satisfy $\lambda_1, \lambda_2, \lambda_3 \geq 0$,  $\lambda_1 + \lambda_2 + \lambda_3 = 1$ and 
$$(x, y) = \lambda_1 \cdot(a_1, b_1) + \lambda_2 \cdot (a_2, b_2) + \lambda_3 \cdot (a_3, b_3).$$
Geometrically, the extended function is piecewise affine -- it is affine on each triangle. 

We devote the remainder of this section to proving that the extended function satisfies conditions 1-5. We list the conditions below, and prove that the extended function satisfies them:

\begin{enumerate}
    \item \underline{$g(x, y): [0,1]^2 \to [0,1]$ and $g$ is continuous.} The extended function takes values in $[0,1]$ because its values are convex combinations of its values on the discretized grid, which are in $[0,1]$. It is continuous because it is piecewise affine. 
    
    \item \underline{$g(x, y)$ is increasing in $x$ and decreasing in $y$.} We will show that $g$ is increasing in $x$; the proof that it is decreasing in $y$ is similar.  
    
    Let $(a_1, b)$ and $(a_2, b)$ be points in the unit square, with $a_1 \leq a_2$. We must show that $g(a_1, b) \leq g(a_2, b)$. First, observe that it suffices to show this when the two points are contained in the same triangle. This is because if $(a_1, b)$ and $(a_2, b)$ were contained in different triangles, then the horizontal line segment $l$ from $(a_1, b)$ to $(a_2, b)$ can be divided from left to right into a sequence of segments (say $l_1, \ldots, l_k$), each of which is contained in a single triangle. Then the fact that $g$ is increasing on each smaller segment $l_i$ would imply that $g$ is increasing on $l$. (For an illustration of this, see Figure
    
    \begin{figure}[t!]
    \centering
    \begin{subfigure}[t]{0.45\textwidth}
        \centering
        \begin{tikzpicture}[]

    \draw[step=2.0,black, thick] (0, -2) grid (6, 2);
    \foreach \i in {0,...,2} {
    \draw[thick] (2*\i, 2) -- (\i*2+2, 0);
    }
    \foreach \i in {0,...,2} {
    \draw[thick] (2*\i, 0) -- (\i*2+2, -2);
    }
    
    \vertex (x1) at (0.55, 0.5) [label = left:{$(a_1, b)$}] {};
    \vertex (x2) at (4.7, 0.5) [label = right:{$(a_2, b)$}] {};
    \vertex (a1) at (1.5, 0.5) {};
    \vertex (a1) at (2, 0.5) {};
    \vertex (a1) at (3.5, 0.5) {};
    \vertex (a1) at (4, 0.5) {};
    \draw[thin] (x1) -- (x2);

    \end{tikzpicture}
        \caption{Illustration of the proof of why we may assume $(a_1, b)$ and $(a_2, b)$ are contained in the same triangle.}
    \end{subfigure}%
    ~ 
    \begin{subfigure}[t]{0.45\textwidth}
        \centering
        \begin{tikzpicture}
        
        \vertex (x0) at (0, 0) [label=left:{$(0,0)$}] {};
        \vertex (x1) at (4, 0) [label=right:{$(x_1,0)$}] {};
        \vertex (x2) at (0, 4) [label= right:{$(0,y_1)$}] {};
        \vertex (a1) at (1, 1) [label=below:{$(a_1, b)$}] {};
        \vertex (a2) at (2, 1) [label=below:{$(a_2, b)$}] {};
        \vertex (y0) at (0, 1) [label=left:{$(0, b)$}] {};
        \vertex (y1) at (3, 1) [label=right:{$(\frac1n - b, b)$}] {};

    \draw[thick] (x0) -- (x1) -- (x2) -- (x0);
    \draw (y0) -- (a1) -- (a2) -- (y1);
    
        \end{tikzpicture}
        \caption{Illustration of the proof of why $g(a_1, b) \leq g(a_2, b)$ for $(a_1, b)$, $(a_2, b)$ in the same triangle and $a_1 \leq a_2$.}
    \end{subfigure}
    \caption{Side-by-side comparison of the function $g$ we used, versus the function $g$ used by Huang et al.}
    \label{fig:lp_extend_monotone}
\end{figure}
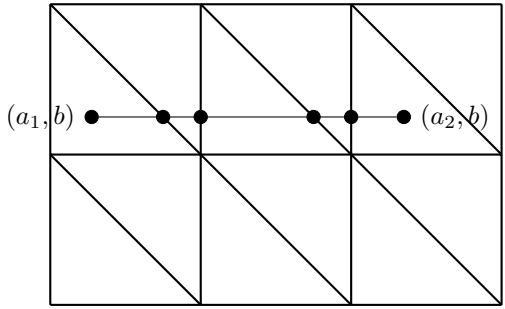
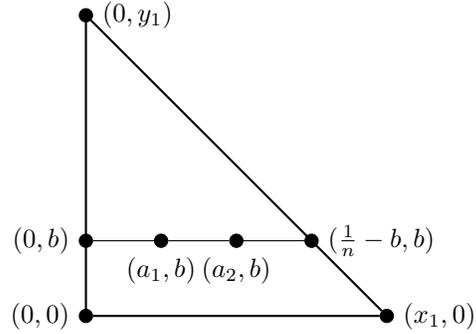
    
    So, we can assume $(a_1, b)$ and $(a_2, b)$ are contained in the same triangle. Without loss of generality, suppose $(a_1, b)$ and $(a_2, b)$ are both contained in the lower-leftmost triangle; that is, the triangle with vertices $(0, 0)$, $(x_1, 0)$, and $(0, y_1)$; the proof for any other triangle is the same. 
    
    Note that $(a_1, b)$ and $(a_2, b)$ are both on the line segment from $(0, b)$ to $(\frac{1}{n}-b, b)$. Since $g$ is piecewise affine in any triangle, it follows that $g(a_1, b) = (1-\lambda_1)\cdot g(0, b) + \lambda_1 \cdot  g(\frac1n - b, b)$, where $0 \leq \lambda_1 \leq 1$ satisfies $\lambda_1 \cdot (\frac1n - b) = a_1$. Similarly, $g(a_2, b) = (1-\lambda_2)\cdot g(0, b) + \lambda_2 \cdot g(\frac1n - b, b)$, where $0 \leq \lambda_2 \leq 1$ satisfies $\lambda_2\cdot(\frac1n - b) = a_2$. Now, since $a_1 \leq a_2$, it follows that $\lambda_1 \leq \lambda_2$. Therefore, to show that $g(a_1, b) \leq g(a_2, b)$ it suffices to show that $g(0, b) \leq g(\frac1n - b, b)$. 
    
    To see this, we note that $g(0, b) = (1-\lambda) \cdot g(0, 0) + \lambda \cdot g(0, \frac1n)$, where $0 \leq \lambda \leq 1 $ satisfies $\frac{\lambda}{n} = b$. Similarly, $g(\frac1n - b, b) = (1-\lambda)\cdot g(\frac1n, 0) + \lambda \cdot g(0, \frac1n)$. Since $g(\frac1n, 0) \geq g(0, 0)$ (this was a constraint in the LP), it follows that $g(0, b) \leq g(\frac1n - b, b)$, as needed.
    
    \item \underline{$\frac{\partial g(x, y)}{\partial x} \leq g(x, y)$}.  Consider a horizontal line segment $l$ between two adjacent gridpoints, say between $(x_i, y_j)$ and $(x_{i+1}, y_j)$. In the triangulation, $l$ is adjacent to two triangles: one triangle $T_1$ below it and one triangle $T_2$ above it. (If $y_i = 0$ or $y_i = 1$, then $l$ is only adjacent to one triangle, but the same argument still goes through.) See Figure \ref{fig:cond3} for an illlustration. Because $g$ is piecewise affine in each triangle, it follows that $\frac{\partial g(x, y)}{\partial x}$ is constant on $T_1 \cup T_2$, and is equal to the slope of $l$. Recall that the LP imposes the following constraint on the slope of $l$:
    $$\mathrm{slope}(l) = \frac{g(x_{i+1}, y_j) - g(x_i, y_j)}{x_{i+1} - x_i} \leq g(x_i, y_{j+1})$$
    Because $g$ is increasing in $x$ and decreasing in $y$, we note that $g(x_i, y_{j+1}) \leq \inf\{g(x, y): (x, y) \in T_1 \cup T_2\}$. Thus $\frac{\partial g(x, y)}{\partial x} \leq g(x, y)$ holds on $T_1 \cup T_2$. Because any triangle is adjacent to some horizonal line segment in the grid, this argument shows that $\frac{\partial g(x, y)}{\partial x} \leq g(x, y)$ holds for all $(x, y)$ in the unit square, and we are done. 
    
    \item \underline{$\frac{\partial g(x, y)}{\partial y} \geq g(x, y) - 1$}. The proof of this is similar to the proof of the previous condition. 
    
    \item \underline{For all $x, y, y'$ with $y' > y$,
    $g(1, y) - g(x, y) \geq g(1, y') - g(x, y')$}.
    Let $\sF = \{0, \frac1n, \frac2n, \ldots, 1\}$. If $x,  y, y' \in \sF$, then the condition holds, because these were constraints imposed by the LP.
    
    Suppose now $(x, y)$ lies in the interior of some triangle $T$. Fix $x$ and $y'$, and imagine varying $y$ up and down such that $(x, y)$ remains inside $T$. Let $I$ be the range of values of $y$ such that $(x, y)$ remains inside $T$. Since $g$ is affine on each triangle, it follows that $\frac{\partial}{\partial y} \left(g(1, y) - g(x, y)\right)$ is constant for all $y$ in $I$. Therefore (by moving $y$ in the direction that decreases the LHS of the inequality if necessary), it suffices to prove the inequality in the case $(x, y)$ is on the boundary of a triangle. Similarly, we may assume that $(x, y')$ lies on the boundary of a triangle. 
    
    Suppose $(x, y)$ and $(x, y')$ both lie on hypotenuses (see Figure \ref{fig:hypo}). The case where one or both of the points lie on a base of a triangle is very similar (and easier), so we will omit it here. 
    
    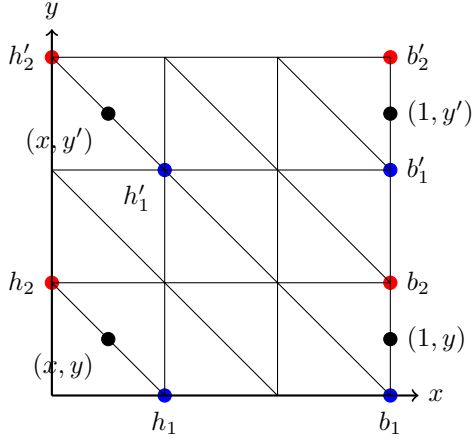
\begin{figure}
    \centering
    \begin{tikzpicture}[scale=.75]
    \vertex (x1) at (1, 1) [label=below left:{$(x, y)$}] {};
    \vertex (y1) at (6, 1) [label=right:{$(1, y)$}] {};
    
    \vertex (x2) at (1, 5) [label=below left:{$(x, y')$}] {};
    \vertex (y2) at (6, 5) [label=right:{$(1, y')$}] {};
    
    \vertex (a1) at (2, 0) [label=below:{$h_1$}, blue] {};
    \vertex (a3) at (0, 2) [label=left:{$h_2$}, red] {};
    
    \vertex (a1') at (2, 4) [label=below left:{$h'_1$}, blue] {};
    \vertex (a3') at (0, 6) [label=left:{$h'_2$}, red] {};
    
    \vertex (b1) at (6, 0) [label=below:$b_1$, blue] {};
    \vertex (b2) at (6, 2) [label=right:$b_2$, red] {};
    
    \vertex (b1') at (6, 4) [label=right:$b'_1$, blue] {};
    \vertex (b2') at (6, 6) [label=right:$b'_2$, red] {};
    
    \draw[step=2.0,black] (0, 0) grid (6, 6);
    \foreach \i in {1,...,3} {
    \draw[thin] (0, \i*2) -- (\i*2, 0);
    }
    \foreach \i in {1,...,2} {
    \draw[thin] (\i*2, 6) -- (6, \i*2);
    }
    \draw[->, thick] (0,0)--(6.5,0) node[right]{$x$};
    \draw[->, thick] (0,0)--(0,6.5) node[above]{$y$};
    \end{tikzpicture}
    \caption{Illustration of proof of condition 5.}
    \label{fig:hypo}
    \end{figure}
    
    Let $h_1$ and $h_2$ be the two endpoints of the hypotenuse containing $(x, y)$, with $h_1$ lower than $h_2$. Similarly, define $h'_1$ and $h'_2$.  Let $b_1$ and $b_2$ be the two endpoints of the vertical grid segment containing $(1, y)$. Similarly, define $b'_1$ and $b'_2$.  We will use the fact that the inequality holds for the gridpoints $(b_1, h_1, b'_1, h'_1)$ and the gridpoints $(b_2, h_2, b'_2, h'_2)$ to deduce that it holds for our points. 
    
    The inequality on the points $(b_1, h_1, b'_1, h'_1)$ is
    $$g(b_1) - g(h_1) \geq g(b'_1) - g(h'_1)$$
    The inequality on the points $(b_2, h_2, b'_2, h'_2)$ is 
    $$g(b_2) - g(h_2) \geq g(b'_2) - g(h'_2)$$
    Now let $0 \leq \lambda \leq 1$ be the scalar so that $\lambda b_1 + (1-\lambda)b_2 = (1, y)$. Observe that we also have $\lambda h_1 + (1-\lambda)h_2 = (x, y)$, $\lambda b'_1 + (1-\lambda)b'_2 = (1, y')$, and $\lambda h'_1 + (1-\lambda) h'_2 = (x, y')$.
    
    Now, multiply the inequality for $(b_1, h_1, b'_1, h'_1)$ by $\lambda$, and multiply the inequality for $(b_2, h_2, b'_2, h'_2)$ by $(1-\lambda)$, then add them together. The result is the inequality
    $$g(1, y) - g(x, y) \geq g(1, y') - g(x, y'),$$
    which is what we wanted.
\end{enumerate}

\section{Checking the Bound}
\label{sec:error}
The linear program gives us function values defined on a discretization of the unit square, which we then extend to a function $g$ defined on the entire unit square via triangulation. It remains now to plug this $g$ into the bound for the competitive ratio given by Theorem \ref{thm:main_bound}. We cannot evaluate the bound analytically for the function $g$ returned by the LP; instead, we evaluate it computationally. 

For $0 \leq \gamma, \tau \leq 1$, let 
$$f(\gamma, \tau) =  (1-\tau)(1-\gamma) + (1-\tau)\int_0^\gamma g(x, \tau) dx + 
\int_0^\tau \min_{\theta \leq \gamma} \left\{(1-g(\theta, y)) + \int_0^\theta g(x, y)dx + \int_\theta^\gamma g(x, \tau) dx\right\} dy$$
so that, by Theorem \ref{thm:main_bound}, the competitive ratio of $g$ is at least $\min_{0 \leq \gamma, \tau \leq 1} f(\gamma, \tau)$. 

When we evaluate this bound using a computer, we incur two sources of error:
\begin{enumerate}
    \item The bound takes a minimum over all $(\gamma, \tau)$ in the unit square. However, using a computer, we can only check a finite number of points $(\gamma, \tau)$. 
    \item For a fixed $(\gamma, \tau)$, we do not calculate $f(\gamma, \tau)$ exactly. Instead, using a computer, we calculate an approximation $\hat{f}(\gamma, \tau)$, by
    \begin{itemize}
        \item Approximating the integrals with finite sums, and
        \item Replacing the inner minimum over all $\theta \leq \gamma$ by a minimum over a finite set of $\theta$. 
    \end{itemize} 
\end{enumerate}
In what follows, we will bound the errors above. This proves that the output of the computer program is a valid bound on the competitive ratio. We will show that $f$ is Lipschitz in $\gamma$ and $\tau$, which implies that checking all values of $(\gamma, \tau)$ in a sufficiently fine discretization of the unit square is enough to obtain a quantifiable bound on the error. 

Before we move on, we remind the reader what it means for a function to be Lipschitz.
\begin{defn}
A function $f: \R^n\to\R$ is $L$-Lipschitz if $\abs{f(x) - f(y)}\leq L\norm{x - y}$ for all $x, y \in \R^n$. 
\end{defn}
It will be convenient for us to work with Lipschitzness in a particular coordinate.
\begin{defn}
A function $f: \R^n \to \R$ is $L$-Lipschitz in its $i$th coordinate if
$$\abs{f(x_1, \ldots, x_i,\ldots, x_n) - f(x_1, \ldots, x'_i, \ldots, x_n)} \leq L\abs{x_i - x_i'}$$
for all $x_1, \ldots, x_i, x'_i, \ldots, x_n \in \R$. 
\end{defn}

Throughout the proofs below, we will repeatedly use the following facts:
\begin{itemize}
    \item $0 \leq \frac{\partial g(x, y)}{\partial x} \leq  g(x, y) \leq 1$ . This follows from conditions 1, 2, and 3. 
    \item $-1 \leq g(x, y) - 1 \leq \frac{\partial g(x, y)}{\partial y} \leq 0$. This follows from conditions 1, 2, and 4. 
\end{itemize}


\begin{restatable}{lemma}{approx}
\label{lem:approx1}
$f(\gamma, \tau)$ is 1-Lipschitz in $\gamma$ and $3$-Lipschitz in $\tau$.
\end{restatable}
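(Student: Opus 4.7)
The plan is to bound the partial derivatives of $f$ in each of its two arguments, using envelope arguments to differentiate through the inner minimum. Introduce the shorthand $J(\theta,\gamma,\tau,y) := (1-g(\theta,y)) + \int_0^\theta g(x,y)\,dx + \int_\theta^\gamma g(x,\tau)\,dx$ and $I(\gamma,\tau,y) := \min_{\theta \leq \gamma} J(\theta,\gamma,\tau,y)$, so $f(\gamma,\tau) = (1-\tau)(1-\gamma) + (1-\tau)\int_0^\gamma g(x,\tau)\,dx + \int_0^\tau I(\gamma,\tau,y)\,dy$.

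For $\gamma$-Lipschitzness, I would first establish $0 \leq I(\gamma_2,\tau,y) - I(\gamma_1,\tau,y) \leq \gamma_2 - \gamma_1$ for $\gamma_1 < \gamma_2$. The upper bound is immediate: the minimizer $\theta_1^*$ at $\gamma_1$ is still feasible at $\gamma_2$, so $I(\gamma_2) - I(\gamma_1) \leq J(\theta_1^*,\gamma_2) - J(\theta_1^*,\gamma_1) = \int_{\gamma_1}^{\gamma_2} g(x,\tau)\,dx \leq \gamma_2 - \gamma_1$. For the lower bound, split on the minimizer $\theta_2^*$ at $\gamma_2$: if $\theta_2^* \leq \gamma_1$, the same comparison applies with roles reversed and is $\geq 0$; if instead $\theta_2^* > \gamma_1$, compare with the feasible choice $\theta = \gamma_1$ at $\gamma_1$ and absorb the negative $g(\gamma_1,y)-g(\theta_2^*,y)$ into $\int_{\gamma_1}^{\theta_2^*} g(x,y)\,dx$ using condition 3, namely $g(\theta_2^*,y) - g(\gamma_1,y) = \int_{\gamma_1}^{\theta_2^*} \partial_x g\,dx \leq \int_{\gamma_1}^{\theta_2^*} g(x,y)\,dx$. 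Consequently $\partial_\gamma f = (1-\tau)(g(\gamma,\tau) - 1) + \partial_\gamma \int_0^\tau I\,dy$ lies in $[-(1-\tau),0] + [0,\tau]$, giving $|\partial_\gamma f| \leq \max(1-\tau,\tau) \leq 1$.

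For $\tau$-Lipschitzness, the key observation is that when $y = \tau$ the two $y$-integrals in $J$ collapse into integrals of $g(x,\tau)$, so $J(\theta,\gamma,\tau,\tau) = 1 - g(\theta,\tau) + \int_0^\gamma g(x,\tau)\,dx$, and by monotonicity in $\theta$ (condition 2) this is minimized at $\theta = \gamma$, yielding the explicit formula $I(\gamma,\tau,\tau) = 1 - g(\gamma,\tau) + \int_0^\gamma g(x,\tau)\,dx$. Differentiating $f$ in $\tau$ via Leibniz produces the boundary term $I(\gamma,\tau,\tau)$, whose $\int_0^\gamma g(x,\tau)\,dx$ piece exactly cancels the corresponding term arising from differentiating $(1-\tau)\int_0^\gamma g(x,\tau)\,dx$. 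After this cancellation,
\[
\partial_\tau f \;=\; \bigl(\gamma - g(\gamma,\tau)\bigr) \;+\; (1-\tau)\int_0^\gamma \partial_\tau g(x,\tau)\,dx \;+\; \int_0^\tau \partial_\tau I(\gamma,\tau,y)\,dy.
\]
Each summand has absolute value at most $1$: the first because $\gamma, g(\gamma,\tau) \in [0,1]$; the second because $\partial_\tau g \in [-1,0]$ by condition 4 forces $(1-\tau)\gamma \leq 1$; the third by the envelope identity $\partial_\tau I = \int_{\theta^*}^\gamma \partial_\tau g(x,\tau)\,dx \in [-(\gamma-\theta^*),0] \subseteq [-1,0]$, whose integral over $[0,\tau]$ lies in $[-1,0]$. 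Summing gives $|\partial_\tau f| \leq 3$.

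The main obstacle is that $I$ is in general only Lipschitz rather than smooth, so the envelope identities for $\partial_\gamma I$ and $\partial_\tau I$ need justification. The cleanest workaround is to recast the whole argument in finite-difference form, with upper and lower bounds on $I(\gamma_2,\tau_2,y) - I(\gamma_1,\tau_1,y)$ obtained by feasibility comparisons of the minimizers (exactly as done above for the $\gamma$ bound), using only that $J$ is $1$-Lipschitz in $\gamma$ and in $\tau$ uniformly in $\theta$ (by conditions 3 and 4). Equivalently, since $I$ is Lipschitz it is differentiable almost everywhere, which is enough to bound the Lipschitz constant via the derivative expressions above.
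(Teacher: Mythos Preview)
Your proposal is correct and structurally parallel to the paper's proof: both decompose $f = p + \int_0^\tau I\,dy$ (the paper writes $q$ for your $I$ and $h$ for your $J$) and bound the variation of each piece separately, with the paper working directly in finite differences as you suggest in your final paragraph. There are two genuine refinements in your version worth noting. First, for the $\gamma$-variation of $I$ you use condition~3 to show $I$ is actually monotone in $\gamma$, obtaining $I(\gamma_2)-I(\gamma_1)\in[0,\gamma_2-\gamma_1]$; the paper instead bounds $\partial_\theta h\in[-1,1]$ and gets only $[-\delta,\delta]$, which still suffices. Second, for the $\tau$-derivative you compute the boundary term $I(\gamma,\tau,\tau)=1-g(\gamma,\tau)+\int_0^\gamma g(x,\tau)\,dx$ explicitly and cancel its integral against the term coming from $\partial_\tau\bigl[(1-\tau)\int_0^\gamma g\bigr]$, leaving three pieces each bounded by $1$; the paper avoids this by crudely bounding $q\in[0,2]$ for the boundary contribution and separately bounding $\partial_\tau p\in[-2,0]$. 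Your route is a bit cleaner and exposes more structure, while the paper's is more mechanical; both yield the same constants $1$ and $3$.
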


\begin{proof}
First, consider varying $\gamma$. For $ \delta > 0$, we have
$$p(\gamma + \delta, \tau) - p(\gamma, \tau) = -\delta(1-\tau) + (1-\tau)\int_\gamma^{\gamma + \delta} \underbrace{g(x, \tau)}_{\in [0,1]} dx \in [-\delta(1-\tau), 0]$$
Also, note that
$$\frac{\partial}{\partial\theta} h(\gamma, \tau, \theta, y) = \underbrace{-\frac{\partial g(\theta, y)}{\partial \theta}}_{\in [-g(\theta, y), 0]} + g(\theta, y) - g(\theta,\gamma) \in [-g(\theta, \gamma), g(\theta, y)] \subset [-1, 1]$$
and
$$h(\gamma+\delta, \tau, \theta, y) - h(\gamma, \tau, \theta, y) = \int_\gamma^{\gamma + \delta} g(x, \tau) dx \in [0, \delta]
$$
Therefore, 
\begin{align*}
    q(\gamma+\delta, \tau, y) - q(\gamma, \tau, y) 
    &= \min_{\theta \leq \gamma + \delta} h(\gamma + \delta, \tau, \theta, y) - \min_{\theta \leq \gamma} h(\gamma, \tau, \theta, y) \\
    &\in \min_{\theta \leq \gamma + \delta} \{ h(\gamma, \tau, \theta, y) + [0, \delta] \} - \min_{\theta \leq \gamma} h(\gamma, \tau, \theta, y) \\
    &= [0, \delta] + \min_{\theta \leq \gamma + \delta}  h(\gamma, \tau, \theta, y) - \min_{\theta \leq \gamma} h(\gamma, \tau, \theta, y) \\
    &\subseteq [0, \delta] + [-\delta, 0] \\
    &= [-\delta, \delta]
\end{align*}
Combining, we get
\begin{align*}
    f(\gamma + \delta, \tau) - f(\gamma, \tau) &= p(\gamma + \delta, \tau) - p(\gamma, \tau) + \int_0^\tau (q(\gamma + \delta, \tau, y) - q(\gamma, \tau, y)) dy \\
    &\in [-\delta(1-\tau), 0] + [-\delta\tau, \delta \tau] \\
    &= [-\delta, \delta\tau] \\
    &\subseteq [-\delta, \delta]
\end{align*}
This shows that $f$ is 1-Lipschitz in $\gamma$. 

Next, we consider varying $\tau$. Let $\delta > 0$. First, we have
\begin{align*}
\frac{\partial p(\gamma, \tau)}{\partial \tau} &= -(1-\gamma) - \int_0^\gamma \underbrace{g(x, \tau)}_{\in [0,1]} dx + (1-\tau)\int_0^\gamma \underbrace{\frac{\partial g(x, \tau)}{\partial \tau}}_{\in [-1, 0]} dx \\
&\in \{-(1-\gamma)\} + [-\gamma, 0] + [-(1-\tau), 0] \\
&\subset [-2, -1]
\end{align*}
Thus, $p(\gamma, \tau + \delta) - p(\gamma, \tau) \in [-2\delta, -\delta]$. 

Next, note that 
$$h(\gamma, \tau + \delta, \theta, y) - h(\gamma, \tau, \theta, y) = \int_\theta^\gamma \underbrace{\left(g(x, \tau + \delta) - g(x, \tau)\right)}_{\in [-\delta, 0]} dy \in [-\delta, 0]$$
This implies that 
\begin{align*}
    q(\gamma, \tau + \delta, y) -  q(\gamma, \tau, y) &= \min_{\theta \leq \gamma} h(\gamma, \tau + \delta, \theta, y) - \min_{\theta \leq \gamma} h(\gamma, \tau, \theta, y) \\
    &\in \min_{\theta \leq \gamma}  \left\{h(\gamma, \tau, \theta, y) + [-\delta, 0] \right\} - \min_{\theta \leq \gamma} h(\gamma, \tau, \theta, y) \\
    &= [-\delta, 0]
\end{align*}
Hence,
\begin{align*}
    \int_0^{\tau + \delta} q(\gamma, \tau + \delta, y) dy - \int_0^\tau q(\gamma, \tau, y) dy 
    &\in (\tau + \delta)[-\delta, 0] +  \int_0^{\tau + \delta} q(\gamma, \tau, y) - \int_0^\tau q(\gamma, \tau, y) dy \\
    &\subset [-\delta, 0] + \int_\tau^{\tau + \delta} \underbrace{q(\gamma, \tau, y)}_{\in [0,2]} dy \\
    &\subset [-\delta, 0] +[0, 2\delta] \\
    &= [-\delta, 2\delta]
\end{align*}
Here, $q(\gamma, \tau, y) \in [0,2]$ is because $q(\gamma, \tau, y) = \min_{\theta \leq \gamma} h(\gamma, \tau, \theta, y)$, and 
$$h(\gamma, \tau, \theta, y) = (1-g(\theta, y)) + \int_0^\theta g(x, y)dx + \int_\theta^\gamma g(x, \tau) dx \leq 1 - g(\theta, y) + g(\theta, y) + \int_\theta^\gamma g(x, \tau) dx \leq 2.$$
(Clearly, $h(\gamma, \tau, \theta, y) \geq 0$. )

Combining, we get 
\begin{align*}
    f(\gamma, \tau + \delta) - f(\gamma, \tau) &= p(\gamma, \tau + \delta) - p(\gamma, \delta) + \int_0^{\tau + \delta} q(\gamma, \tau + \delta, y) dy - \int_0^\tau q(\gamma, \tau, y) dy \\
    &\in [-2\delta, -\delta] + [-\delta, 2\delta] \\
    &= [-3\delta, \delta] \\
    &\subset [-3\delta, 3\delta]
\end{align*}
This shows that $f$ is 3-Lipschitz in $\tau$. 

\end{proof}

The preceding lemma allows us to control the error incurred from checking the bound over all $(\gamma, \tau)$ in a discretization of the unit square instead of the entire unit square. The second source of error is that for a fixed $(\gamma, \tau)$, we evaluate an approximation $\hat{f}(\gamma, \tau)$ to $f(\gamma, \tau)$, because we replace the integrals with discrete sums and the  minimization over all $\theta \leq \gamma$ with a minimization over finitely many $\theta$. The following lemma controls the second source of error.

To make notation less cluttered, let 
\begin{itemize}
    \item $p(\gamma, \tau) = (1-\gamma)(1-\tau) + (1-\tau)\int_0^\gamma g(x, \tau) dx$, 
    \item $h(\gamma, \tau, \theta, y) = (1-g(\theta, y)) + \int_0^\theta g(x, y)dx + \int_\theta^\gamma g(x, \tau) dx$
    \item $q(\gamma, \tau, y) = \min_{\theta \leq \gamma} h(\gamma, \tau, \theta, y)$
\end{itemize}
so that $f(\gamma, \tau) = p(\gamma, \tau) + \int_0^\tau q(\gamma, \tau, y) dy$. 

\begin{restatable}{lemma}{approxx}
\label{lem:approx2}
Fix $\gamma, \tau \in [0,1]$, and let $m$ be a positive integer. 
Let $\hat{f}(\gamma, \tau)$ be the approximation to $f(\gamma, \tau)$ obtained by:
\begin{itemize}
    \item Replacing the integral $\int_0^\tau q(\gamma, \tau, y) dy$ with a trapezoidal sum with subdivision length $\frac1m$,
    \item Replacing the other three integrals with left Riemann sums with subdivision length $\frac1m$, and
    \item Replacing the minimum over all $\theta \leq \gamma$ with a minimum over a discretization with subdivision length $\frac1m$.
\end{itemize}
Then $\hat{f}(\gamma, \tau) \leq f(\gamma, \tau) + \frac{5}{4m}$. 

More precisely, $\hat{f}$ is defined as follows. Define $x_k = y_k = \frac{k}{m}$ for $k = 0, 1, \ldots, m$. Let $i$ and $j$ be the integers such that $x_i \leq \gamma < x_{i+1}$, and $y_j\leq \tau < y_{j+1}$. Then
\begin{align*}
    \hat{f}(\gamma, \tau) = \hat{p}(\gamma, \tau) + \frac{1}{m}\sum_{k=0}^{j-1} \frac{\hat{q}(\gamma, \tau, y_k) + \hat{q}(\gamma, \tau, y_{k+1})}{2}
\end{align*}
where $\hat{p}$ and $\hat{q}$ are defined to be
$$\hat{p}(\gamma, \tau) = (1- \gamma)(1-\tau) + (1-\tau) \cdot \frac1m \sum_{k=0}^{i-1} g(x_k, \tau)$$
and 
$$\hat{q}(\gamma, \tau, y) = \min_{k \leq i+1} \left\{1 - g(x_k, y) + \frac{1}{m}\sum_{d = 0}^{k-1} g(x_d, y) + \frac{1}{m} \sum_{d=k}^{i-1} g(x, y_{j+1}) \right\}$$

\end{restatable}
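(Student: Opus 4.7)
The plan is to write $\hat f(\gamma,\tau) - f(\gamma,\tau)$ as a sum of four controlled errors and bound each. Throughout I will use the consequences $\partial_x g \in [0,1]$ and $\partial_y g \in [-1,0]$ of conditions 2--4, so that $g$ is $1$-Lipschitz in each coordinate. Writing $\Delta q(y) := \hat q(\gamma,\tau,y) - q(\gamma,\tau,y)$, the decomposition is
\begin{align*}
\hat f - f &= (\hat p - p) + \frac{1}{m}\sum_{k=0}^{j-1} \frac{\Delta q(y_k) + \Delta q(y_{k+1})}{2} \\
&\quad + \left[\frac{1}{m}\sum_{k=0}^{j-1} \frac{q(y_k) + q(y_{k+1})}{2} - \int_0^{y_j} q(y)\, dy\right] - \int_{y_j}^\tau q(y)\, dy,
\end{align*}
and I will bound the four summands by $0$, $j/m^2$, $1/(4m)$, and $0$ respectively.

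For the first and fourth summands: $\hat p \le p$ because $g(\cdot,\tau)$ is increasing in $x$, making the left Riemann sum an underestimate of $\int_0^\gamma g(x,\tau)\,dx$; and $-\int_{y_j}^\tau q(y)\,dy \le 0$ since $q \ge 0$. For the third summand (the trapezoidal error on $q$ over $[0,y_j]$), I would first verify $q$ is $1$-Lipschitz in $y$ by differentiating under the integral sign in $h$ (giving $\partial_y h \in [-\gamma, 1-g(\theta,y)] \subseteq [-1,1]$) and then taking a min over $\theta$. Applying the standard per-subinterval bound $Lh^2/4$ for the trapezoidal rule on an $L$-Lipschitz function, summed over $j \le m$ subintervals of width $1/m$, then yields a contribution of at most $1/(4m)$.

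The crucial step, and main obstacle, is establishing $\Delta q(y) \le 1/m$ for every $y$. I would take $\theta^* \in [0,\gamma]$ achieving the minimum in $q(\gamma,\tau,y)$ and choose $k^* \in \{0,\ldots,i+1\}$ with $x_{k^*-1} \le \theta^* \le x_{k^*}$, and then split
\[
\hat h(x_{k^*},y) - h(\theta^*,y) = [\hat h(x_{k^*},y) - h(x_{k^*},y)] + [h(x_{k^*},y) - h(\theta^*,y)].
\]
When $k^* \le i$, the first bracket is $\le 0$ by left-Riemann underestimation on both sums (using also $\int_{x_i}^\gamma g(x,\tau)\,dx \ge 0$), and the second bracket equals $[g(\theta^*,y) - g(x_{k^*},y)] + \int_{\theta^*}^{x_{k^*}}[g(x,y) - g(x,\tau)]\,dx \le 0 + (x_{k^*} - \theta^*) \le 1/m$ by monotonicity of $g$ in $x$ and $1$-Lipschitzness in $y$. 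The delicate case is $k^* = i+1$, where $x_{k^*} > \gamma$ and $h(x_{k^*},y)$ is not directly comparable to $\hat h(x_{k^*},y)$ via matching integral endpoints: here I would compute $\hat h(x_{i+1},y) - h(\theta^*,y)$ by hand, showing that the extra summand $g(x_i,y)/m$ produced by extending the $y$-sum up to $d=i$ is absorbed within the $1/m$ budget via the inequality $\int_{x_i}^{\theta^*} g(x,y)\,dx \ge (\theta^* - x_i)\,g(x_i,y)$. Combining the four bounds gives $\hat f - f \le 0 + 1/m + 1/(4m) + 0 = 5/(4m)$.
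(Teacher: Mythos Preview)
Your proposal is correct and follows essentially the same decomposition and sequence of bounds as the paper: $\hat p\le p$ from left Riemann sums, $\hat q(y)\le q(y)+1/m$, $1$-Lipschitzness of $q$ in $y$ to control the trapezoidal error, and $q\ge 0$ for the leftover interval $[y_j,\tau]$.

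The one organizational difference is in how you derive $\hat q(y)\le q(y)+1/m$. The paper first observes (trivially, since enlarging the index set can only decrease a minimum) that $\hat q\le \min_{k\le i}\hat h_k\le \min_{\theta\in\{x_0,\dots,x_i\}}h(\gamma,\tau,\theta,y)$, and then proves $h$ is $1$-Lipschitz in $\theta$; the discretization error $1/m$ then follows abstractly from Lipschitzness, with no case analysis. You instead round the true minimizer $\theta^*$ up to $x_{k^*}$ and compute $\hat h(x_{k^*},y)-h(\theta^*,y)$ directly, which forces you to handle $k^*=i+1$ separately. Your treatment of that case via $\int_{x_i}^{\theta^*} g(x,y)\,dx\ge(\theta^*-x_i)g(x_i,y)$ is correct, but the paper's route avoids the case split entirely and is a bit cleaner.
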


\begin{proof}
Recall that $f(\gamma, \tau) = p(\gamma, \tau) + \int_0^\tau q(\gamma, \tau, y) dy$, where
\begin{itemize}
    \item $p(\gamma, \tau) = (1-\gamma)(1-\tau)   + (1-\tau) \int_0^\gamma g(x, \tau) dx$ and
    \item $q(\gamma, \tau, y) = \int_0^\tau \min_{\theta \leq \gamma} \left\{ 1 - g(\theta, y) + \int_0^\theta g(x, y) dx + \int_\theta^\tau g(x, \tau) dx \right\}$
\end{itemize}
First, note that left Riemann sums always underapproximate the integral of an increasing function, and $g(x, y)$ is increasing in $x$. Hence, wherever we discretize an integral of $g$ with respect to $x$, we obtain a lower bound. It follows that
\begin{itemize}
    \item $\hat{p}(\gamma, \tau) \leq p(\gamma, \tau)$, and 
    \item $\hat{q}(\gamma, \tau, y) \leq \min_{k \leq i+1} \left\{1 - g(x_k, y) + \int_0^{x_k} g(x, y) dx + \int_{x_k}^{x_i} g(x, y_{j+1}) dx \right\}$. 
    
    Let $\sF_\gamma = \{x_k: x_k \leq \gamma\}$. 
    Note that we can further upper bound the previous expression by  
    $\min_{\theta \in \sF_\gamma} \left\{1 - g(\theta, y) + \int_0^{\theta} g(x, y) dx + \int_{\theta}^{\gamma} g(x, \tau) dx \right\}$; this is because $x_i \leq \gamma < x_{i+1}$ and $y_j \leq \tau < y_{j+1}$, and $g(x, y_{j+1}) \leq g(x, \tau)$. 
\end{itemize}
Hence,   
\begin{align*}
\hat{f}(\gamma, \tau)
&\leq p(\gamma, \tau) + \frac{1}{m} \sum_{k=0}^{j-1}\frac{\min_{\theta \in \sF_\gamma} h(\gamma, \tau, \theta, y_{k}) + \min_{\theta \in \sF_\gamma} h(\gamma, \tau, \theta, y_{k+1})}{2},
\end{align*}
where $h(\gamma, \tau, \theta, y) := 1 - g(\theta, y) + \int_0^{\theta} g(x, y) dx + \int_{\theta}^{\gamma} g(x, \tau) dx$. 
The rest of the proof will be devoted to bounding the second term in the displayed inequality above.

Observe that $h$ is 1-Lipschitz in $\theta$, because 
$$\frac{\partial}{\partial \theta} h(\gamma, \tau, \theta, y) = \underbrace{-\frac{\partial}{\partial \theta} g(\theta, y)}_{\in [-g(\theta, y), 0]} + g(\theta, y) - g(\theta, \tau)  \in [-g(\theta, \tau), g(\theta, y)] \subset [-1, 1]$$
This implies that, for any $\gamma, \tau, y$,
$$\min_{\theta \in F_\gamma} h(\gamma, \tau, \theta, y) \leq \min_{\theta \leq \gamma} h(\gamma, \tau, \theta, y) + \frac{1}{m}$$
This is because if $\theta^* = \arg\min_{\theta \leq \gamma} h(\gamma, \tau, \theta, y)$, we can find some $\bar{\theta} \in F_\gamma$ with $\abs{\bar{\theta} - \theta^*} < \frac1m$, and then
$$\min_{\theta \in F_\gamma} h(\gamma, \tau, \theta, y) \leq h(\gamma, \tau, \bar{\theta}, y) \leq h(\gamma, \tau, \theta^*, y) + \frac1m = \min_{\theta \leq \gamma} h(\gamma, \tau, \theta, y) + \frac1m,$$
where the second inequality is because $h$ is 1-Lipschitz in $\theta$. 

Next, observe that $h(\gamma, \tau, \theta, y)$ is 1-Lipschitz in $y$, because
$$\frac{\partial}{\partial y} h(\gamma, \tau, \theta, y) = \underbrace{-\frac{\partial g(\theta, y)}{\partial y}}_{\in [0, 1]} + \int_0^\theta \underbrace{\frac{\partial g(x, y)}{\partial y}}_{\in [-1, 0]} dx\in [-1, 1]$$
This implies that $q(\gamma, \tau, y) = \min_{\theta \leq \gamma} h(\gamma, \tau, \theta, y)$ is 1-Lipschitz in $y$.

We can now put these facts together to prove the desired bound. We have 
\begin{align*}
    &\frac{1}{m} \sum_{k=0}^{j-1}\frac{\min_{\theta \in \sF_\gamma} h(\gamma, \tau, \theta, y_{k}) + \min_{\theta \in \sF_\gamma} h(\gamma, \tau, \theta, y_{k+1})}{2}
    \\
    &\leq \frac1m \sum_{k=0}^{j-1} \left(\frac1m + \frac{\min_{\theta \leq \gamma} h(\gamma, \tau, \theta, y_{k}) + \min_{\theta \leq \gamma} h(\gamma, \tau, \theta, y_{k+1})}{2}\right) \\
    &=  \frac{j}{m^2} + \frac1m \sum_{k=0}^{j-1} \left( \frac{q(\gamma, \tau, y_k) + q(\gamma, \tau, y_{k+1})}{2}\right) \\
    &\leq \frac{j}{m^2} + \frac{j}{4m^2} + \int_0^\tau q(\gamma, \tau, y) dy 
\end{align*}
The last inequality is because $\frac1m \sum_{k=0}^{j-1} \left( \frac{q(\gamma, \tau, y_k) + q(\gamma, \tau, y_{k+1})}{2}\right)$ is the trapezoidal sum approximation of the integral $\int_0^\tau q(\gamma, \tau, y) dy$, with discretization length $\frac1m$. Since $q(\gamma, \tau, y)$ is 1-Lipschitz in $y$, this incurs an additive error of at most $\frac{j}{4m^2}$ by Lemma \ref{lem:lipschitz_integral} in Appendix \ref{app:lipschitz}. 

Combining, we get
\begin{align*}
\hat{f}(\gamma, \tau) &\leq p(\gamma, \tau) + \frac{j}{m^2} + \frac{j}{4m^2} + \int_0^\tau q(\gamma, \tau, y) dy \\
&= f(\gamma, \tau) + \frac{j}{m^2} + \frac{j}{4m^2} \\
&\leq f(\gamma, \tau) + \frac{5}{4m},
\end{align*}
as claimed. 

\end{proof}

Combining the above two lemmas allows us to  quantify the error incurred when we evaluate the bound in Theorem \ref{thm:main_bound} using a computer.

\begin{corollary}
\label{cor:compcheck}
Let $\sF = \{0, \frac{1}{n}, \frac2n, \ldots, 1\}^2$ be an $n \times n$ discretization of the unit square. If we minimize over all $(\gamma, \tau) \in \sF$, of the function $\hat{f}(\gamma, \tau)$ defined in Lemma \ref{lem:approx2}, then the minimum value satisfies
$$\min_{(\gamma, \tau) \in \sF} \hat{f}(\gamma, \tau) \leq \min_{(\gamma, \tau) \in [0,1]^2} f(\gamma, \tau) + \frac{2}{n} + \frac{5}{4m}.$$
\end{corollary}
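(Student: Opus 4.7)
The plan is to combine Lemma \ref{lem:approx1} (Lipschitzness of $f$) and Lemma \ref{lem:approx2} (closeness of $\hat{f}$ to $f$) via a triangle-inequality-style argument. The two error sources stated at the start of Section \ref{sec:error} are exactly what the two lemmas address, so the corollary should follow by taking the minimizer of $f$, rounding it to the nearest grid point, and tracking the incurred error through each lemma.

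Concretely, I would begin by letting $(\gamma^*, \tau^*) \in \arg\min_{(\gamma,\tau) \in [0,1]^2} f(\gamma,\tau)$. Since $\sF$ is a $\tfrac{1}{n}$-spaced grid on $[0,1]^2$, there exists a nearest grid point $(\gamma', \tau') \in \sF$ with $|\gamma' - \gamma^*| \leq \tfrac{1}{2n}$ and $|\tau' - \tau^*| \leq \tfrac{1}{2n}$. Applying Lemma \ref{lem:approx1} coordinate by coordinate (first moving from $(\gamma^*, \tau^*)$ to $(\gamma', \tau^*)$ and paying the $1$-Lipschitz cost in $\gamma$, then moving to $(\gamma', \tau')$ and paying the $3$-Lipschitz cost in $\tau$) gives
\[
f(\gamma', \tau') \;\leq\; f(\gamma^*, \tau^*) + 1 \cdot \tfrac{1}{2n} + 3 \cdot \tfrac{1}{2n} \;=\; f(\gamma^*, \tau^*) + \tfrac{2}{n}.
\]

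Next I would apply Lemma \ref{lem:approx2} at the grid point $(\gamma', \tau')$, obtaining $\hat{f}(\gamma', \tau') \leq f(\gamma', \tau') + \tfrac{5}{4m}$. Chaining this with the previous inequality yields
\[
\hat{f}(\gamma', \tau') \;\leq\; f(\gamma^*, \tau^*) + \tfrac{2}{n} + \tfrac{5}{4m} \;=\; \min_{(\gamma, \tau) \in [0,1]^2} f(\gamma, \tau) + \tfrac{2}{n} + \tfrac{5}{4m}.
\]
Since $(\gamma', \tau') \in \sF$, the minimum of $\hat{f}$ over $\sF$ is at most $\hat{f}(\gamma', \tau')$, so the desired inequality follows immediately.

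There is no real obstacle here; this is essentially a bookkeeping step that stitches together the two preceding lemmas. The only minor subtlety is being careful about the coordinate-wise Lipschitz constants, so that the $1$ from the $\gamma$ direction and the $3$ from the $\tau$ direction combine to give the stated constant of $2/n$ (rather than, say, $4/n$) when one uses the correct distance $1/(2n)$ to the nearest grid point in each coordinate. A minimum-based caveat worth noting for the write-up is that if $\gamma^* = 1$ or $\tau^* = 1$ one can just take $(\gamma', \tau')$ to be $(\gamma^*, \tau^*)$ itself, since these endpoints lie in $\sF$; in all other cases the $\tfrac{1}{2n}$ bound applies.
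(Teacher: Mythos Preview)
Your proposal is correct and takes essentially the same approach as the paper: pick the minimizer $(\gamma^*,\tau^*)$ of $f$, round to the nearest grid point using the $\tfrac{1}{2n}$ bound in each coordinate, apply the $1$- and $3$-Lipschitz estimates from Lemma~\ref{lem:approx1} to get the $\tfrac{2}{n}$ term, and then invoke Lemma~\ref{lem:approx2} for the $\tfrac{5}{4m}$ term. The only cosmetic difference is that the paper applies Lemma~\ref{lem:approx2} uniformly over $\sF$ first and then the Lipschitz step, whereas you do the Lipschitz step first and apply Lemma~\ref{lem:approx2} at the single rounded point; either order yields the same bound.
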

\begin{proof}
By Lemma \ref{lem:approx2}, we have
$\min_{(\gamma, \tau) \in \sF} \hat{f}(\gamma, \tau) \leq \min_{(\gamma, \tau) \in \sF} f(\gamma, \tau) + \frac{5}{4m}.$

Now let $(\gamma^*, \tau^*) = \arg\min_{(\gamma, \tau) \in [0,1]^2} f(\gamma, \tau)$. Let $(\hat{\gamma}, \hat{\tau})$ be the closest point to $(\gamma^*, \tau^*)$ in the discretized grid $\sF$. Then $\abs{\hat{\gamma} - \gamma^*} \leq \frac{1}{2n}$, and $\abs{\hat{\tau} - \tau^*} \leq \frac{1}{2n}$. By Lemma \ref{lem:approx1}, we know $f$ is 1-Lipschitz in $\gamma$ and 3-Lipschitz in $\tau$, which implies that
$$\min_{(\gamma, \tau) \in \sF} f(\gamma, \tau) \leq f(\hat{\gamma}, \hat{\tau}) \leq f(\gamma^*, \tau^*) + \frac{1}{2n} + \frac{3}{2n}.$$
Chaining this with the previous displayed inequality, we obtain
$$\min_{(\gamma, \tau) \in \sF} \hat{f}(\gamma, \tau) \leq f(\gamma^*, \tau^*) + \frac{2}{n} + \frac{5}{4m},$$
as claimed. 
\end{proof}

\section{Computational Results}
\label{sec:computation}
In this section, we describe the computations that we performed to obtain a competitive ratio of $0.6629$.\footnote{The code for this paper can be found at \href{https://github.com/MapleOx/rankinglp}{https://github.com/MapleOx/rankinglp}.} Recall that Theorem \ref{thm:main_bound} states that the competitive ratio of RANKING is bounded below by an expression of the form $\min_{(\gamma, \tau) \in [0,1]^2} f(\gamma, \tau)$, where $f$ depends on the function $g$ that the algorithm uses. To obtain our competitive ratio, we
\begin{enumerate}
    \item Solve the LP in Section \ref{sec:lp} for an appropriate discretization of the unit square. (We chose a $50 \times 50$ discretization here.) 
    
    \item Plug the function $g$ obtained from the LP into the bound in Theorem \ref{thm:main_bound}. 
\end{enumerate}

Note that for the function $g$ obtained from the LP, we can only evaluate the bound in Theorem \ref{thm:main_bound} approximately. This is because $g$ is a piecewise-affine function defined by interpolating its values on a $50\times 50$ grid, so it has no amenable closed form. As described in Section \ref{sec:error}, we let $\sF = \{0, \frac1n, \ldots, 1\}^2$ for some large enough $n$, and we evaluate $\min_{(\gamma, \tau) \in \sF} \hat{f}(\gamma, \tau)$, where $\hat{f}$ is an approximation to $f$ amenable to computer evaluation. (Again, refer to Section \ref{sec:error} for the details.)

Corollary \ref{cor:compcheck} gives us quantifiable bound on the error incurred when we evaluate $\min_{(\gamma, \tau) \in \sF} \hat{f}(\gamma, \tau)$ instead of the true bound $\min_{(\gamma, \tau) \in [0,1]^2} f(\gamma, \tau)$. We used a computer to evaluate $\min_{(\gamma, \tau) \in \sF} \hat{f}(\gamma, \tau)$ with $n = 2^{14}$ and $m = 2^{10}$, and obtained $\min_{(\gamma, \tau) \in \sF} \hat{f}(\gamma, \tau) = 0.66433$. Thus, by Corollary \ref{cor:compcheck}, the competitive ratio of the algorithm is at least
$$\min_{(\gamma, \tau) \in \sF} \hat{f}(\gamma, \tau) - \frac2n - \frac{5}{4m} = 0.66298.$$

Computing the bound for the above choice of parameters $n$ and $m$ necessitated the use of clever computation techniques; the naive computation (which simply goes through all $(\gamma, \tau) \in \sF$ one by one, evaluating from scratch $\hat{f}(\gamma, \tau)$ for each) is \emph{too slow} for the size of the discretization we required to obtain a good bound. (For a point $(\gamma, \tau) \in \sF$, we estimate that evaluating $\hat{f}(\gamma, \tau)$ is roughly a $O(m^3)$ operation. The naive computation, which does this for each of the $n^2$ points in $\sF$, is then a $O(n^2m^3)$ computation, which is much too slow for the parameters $n=2^{14}$ and $m = 2^{10}$.) To speed up the computation, we used two techniques:
\begin{enumerate}
    \item Precomputation of values that are used repeatedly by the code, and
    \item Parallelization. 
\end{enumerate}
Even after speeding up the computation using precomputed tables and parallelization, we still needed two days of computing time on a 64-core machine with 64GB of memory.\footnote{We performed this computation on Amazon EC2. We used a compute-optimized \texttt{c6g.16xlarge} instance, running the Amazon Linux 2 AMI.} Without either one of these techniques, the computation would not have terminated in a reasonable amount of time. In the remainder of this section, we describe the above techniques in more detail.  

\begin{remark*}
The perceptive reader might notice that it would be conceptually simpler to skip the second step given above  altogether (i.e.\ plugging the function $g$ from the LP into the bound of Theorem \ref{thm:main_bound}), since the objective of the LP is already an approximation of the bound in Theorem \ref{thm:main_bound}. The reason we do not do this is because to be able to prove a good enough bound, we need to evaluate $\min_{(\gamma, \tau) \in \sF}\hat{f}(\gamma, \tau)$ for a fine enough discretization. However, solving the LP is prohibitively expensive for large discretizations. To put this into context, we solved the LP on a $50 \times 50$ discretization, and used the output of the LP to evaluate $\min_{(\gamma, \tau) \in \sF} \hat{f}(\gamma, \tau)$ on an $n \times n$ discretization, where $n = 2^{14}$. From the description of the LP in Section \ref{sec:lp}, it can be seen that for an $n \times n$ discretization, the LP has roughly $n^3$ variables and $n^4$ constraints. For $n = 2^{14}$, this would have been too large an LP to solve.
\end{remark*}

\subsection{Precomputing Tables}
The computation we are trying to perform is $\min_{(\gamma, \tau) \in \sF} \hat{f}(\gamma, \tau)$. Recall from Section \ref{sec:error} that $\hat{f}$ is defined as
\begin{align*}
    \hat{f}(\gamma, \tau) = \hat{p}(\gamma, \tau) + \frac{1}{m}\sum_{k=0}^{j-1} \frac{\hat{q}(\gamma, \tau, y_k) + \hat{q}(\gamma, \tau, y_{k+1})}{2}
\end{align*}
where $\hat{p}$ and $\hat{q}$ are defined to be
\begin{itemize}
    \item $\hat{p}(\gamma, \tau) = (1- \gamma)(1-\tau) + (1-\tau) \cdot \frac1m \sum_{k=0}^{i-1} g(x_k, \tau)$, and
    \item $\hat{q}(\gamma, \tau, y) = \min_{k \leq i+1} \left\{1 - g(x_k, y) + \frac{1}{m}\sum_{d = 0}^{k-1} g(x_d, y) + \frac{1}{m} \sum_{d=k}^{i-1} g(x_d, y_{j+1}) \right\}$,
\end{itemize}
where in the above expressions, 
\begin{itemize}
    \item $x_k = y_k = \frac{k}{m}$, and
    \item $i$ and $j$ are defined to be the integers such that $x_i \leq \gamma < x_{i+1}$, and $y_j \leq \tau < y_{j+1}$.
\end{itemize} 

The key observation is that for two different points $(\gamma, \tau)$ and $(\gamma', \tau')$, some parts of the computation of $\hat{f}(\gamma, \tau)$ and $\hat{f}(\gamma', \tau')$ are the same. Thus, we can speed up the code by precomputing these values and storing them in memory, so that they can be fetched instead of being recomputed each time they are needed. We identified two types of values that could be reused, and precomputed a table for each. 

\textbf{A table to store the values of g.}
From the expression for $\hat{f}$, we see that it involves many evaluations of $g$. We can precompute these values and store them in an table for future use. Note that we only ever need to evaluate $g$ on points of the form $(x_i, y_j) = (i/n, j/n)$, which results in a $(n+1)\times (n+1)$ table to be stored in memory. For our choice of $n = 2^{14}=16384$, this resulted in a table of size roughly 6GB.

\textbf{A table to store the values of the inner minimum.}
We also precomputed a table to store the values of $\hat{q}(\gamma, \tau, y)$. Note that $\hat{q}(\gamma, \tau, y)$ only depends on $x_i$, $y_j$, and $y$, where $x_i \leq \gamma < x_{i+1}$, and $y_j \leq \tau < y_{j+1}$. That is, the computation of $\hat{q}(\gamma, \tau, y)$ rounds $\gamma$ and $\tau$ to the nearest points on the $\frac1m$ discretized grid. Thus there are $m+1$ possible values for each of $x_i$, $y_j$, and $y$, which implies that there are $(m+1)^3$ possible values for $\hat{q}(\gamma, \tau, y)$. We precomputed a table to store all of these values in memory. For our choice of $m = 2^{10} = 1024$, this resulted in a table of size roughly 8GB. 

\subsection{Parallelization}
The other efficiency gain came from parallelizing the code. With the use of  precomputed tables, our code runs in two stages:
\begin{enumerate}
    \item Stage 1: Compute the two tables described above.
    \item Stage 2: Use the precomputed tables to compute $\min_{(\gamma, \tau) \in\sF} \hat{f}(\gamma, \tau)$. 
\end{enumerate}
Both stages are amenable to parallelization. For the first stage, computing the value of a table entry is independent of computing the value of another table entry, so filling each table can be done in parallel. For the second stage, evaluating $\hat{f}(\gamma, \tau)$ is independent of evaluating $\hat{f}(\gamma', \tau')$ for different pairs $(\gamma, \tau)$ and $(\gamma', \tau')$, so this can also be done in parallel. We used the \texttt{multiprocessing} module in Python to parallelize our code, which we then ran on a 64-core machine on Amazon's EC2. In total, this took about 2 days. Parallelizing was an essential step for making the code run in a reasonable amount of time; extrapolating from its running time on the 64-core machine, we estimate that a non-parallelized version would have taken more than 100 days to run. 



\section{Limits of our Method}
\label{sec:upper}
In this section, we show that our approach cannot obtain a competitive ratio significantly better than 0.6629. Thus, any further progress beyond this bound will require either further weakening in the assumptions of $g$, or a stronger analysis than that of Huang et al. Precisely, we prove the following theorem.

\begin{restatable}{theorem}{upper}
\label{thm:upper}
For any $g$ function that satisfies conditions 1--5, the value of the the bound in Theorem \ref{thm:main_bound} is at most $0.6688$. 
\end{restatable}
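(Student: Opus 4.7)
The plan is to construct a linear program whose optimal value provably upper-bounds $\min_{(\gamma,\tau) \in [0,1]^2} f(\gamma,\tau)$ for any $g$ satisfying conditions 1--5, and then to solve it numerically to get the bound $0.6688$. The LP is a ``reversed'' version of the one in Section \ref{sec:lp}: both the constraint direction and the objective direction are flipped, so that instead of producing a feasible $g$ whose value lower-bounds the competitive ratio, it produces a relaxation whose value upper-bounds it.

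For the constraints, I would replace those of Section \ref{sec:lp_constraints} by weaker ones, chosen so that the restriction to the discretized $n \times n$ grid of every continuous $g$ satisfying conditions 1--5 is a feasible LP solution. Conditions 1, 2, and 5 translate immediately to necessary constraints on the grid values. For conditions 3 and 4, I would integrate the differential inequality across a single grid cell and use monotonicity: e.g.\ integrating condition 3 from $x_i$ to $x_{i+1}$ yields $g(x_{i+1},y_j) - g(x_i,y_j) \leq (x_{i+1}-x_i)\,g(x_{i+1},y_j)$, which is strictly weaker than the Section \ref{sec:lp_constraints} constraint (which had $g(x_i,y_{j+1})$ on the right-hand side), and symmetrically for condition 4. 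Every admissible continuous $g$ therefore supplies a feasible point of the new LP.

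For the objective, I would replace $\min_{(\gamma,\tau)\in[0,1]^2} f(\gamma,\tau)$ by $\min_{(\gamma,\tau)\in\mathcal{F}} \tilde{f}(\gamma,\tau)$, where $\mathcal{F}$ is the grid and $\tilde{f}$ is a linear function of the grid values of $g$ satisfying $\tilde{f}(\gamma,\tau) \geq f(\gamma,\tau)$ at every grid point. The upper-bounding $\tilde{f}$ is built by: (i) replacing each integral $\int_a^b g(x,y)\,dx$ with its right Riemann sum, which dominates the integral since $g$ is increasing in $x$; (ii) replacing the inner $\min_{\theta\leq\gamma}$ with a minimum over a discrete $\theta$-grid, which can only increase the value; and (iii) approximating the outer integral $\int_0^\tau q(\gamma,\tau,y)\,dy$ by a quadrature plus an additive Lipschitz correction (since $q$ is not known to be monotone in $y$, a pure right Riemann sum need not be an upper bound). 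Since each modification only inflates $f$ at grid points, any admissible $g$ satisfies $\min_{[0,1]^2} f \leq \min_{\mathcal{F}} f \leq \min_{\mathcal{F}} \tilde{f}$, and the LP — maximize a dummy variable $t$ subject to $t \leq \tilde{f}(x_i,y_j)$ at every grid point and to the weakened constraints above — has optimum at least $\min_{\mathcal{F}}\tilde{f}$ for that $g$, and hence at least $\sup_g \min_{[0,1]^2} f$.

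The main obstacle is (iii), producing an honest upper bound on the outer integral while keeping the expression linear in the LP variables. The route is to bound $|\partial q/\partial y|$ uniformly via conditions 2 and 4 (which give $|\partial g/\partial y| \leq 1$), analogously to Lemma \ref{lem:approx1}, and add an absolute correction of order $1/n$ to the quadrature; this correction is a constant in the LP and shrinks as the grid is refined. Once the LP is assembled, solving it on a sufficiently fine grid produces the numerical value $0.6688$. Crucially, unlike the construction of Section \ref{sec:lp}, the LP solution does not have to be extended to a continuous $g$ and re-checked via Theorem \ref{thm:main_bound}, so this computation is substantially less expensive than that of Section \ref{sec:computation}.
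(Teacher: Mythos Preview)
Your proposal is correct and follows essentially the same approach as the paper: weaken the discretized constraints for conditions 3 and 4 (putting $g(x_{i+1},y_j)$ on the right-hand side, justified via the mean value theorem), replace $x$-integrals by right Riemann sums, discretize the inner and outer minima, and handle the non-monotone outer $y$-integral by a trapezoidal sum plus a $1/(4n)$ Lipschitz correction. The only notable difference is that the paper, for computational tractability, restricts the outer minimum to a hand-picked set of seven $(\gamma,\tau)$ points and the inner minimum to $\theta \in \{0,\gamma\}$ rather than using the full grid, after observing empirically that this barely loosens the bound.
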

\begin{proof}

We can write an LP for which any function $g$ that satisfies conditions 1--5 is feasible, and whose objective upper bounds the value of the bound in Theorem \ref{thm:main_bound} for $g$. This LP is a slight modification of the one in Section \ref{sec:lp}. The reason we must modify the LP slightly is to ensure that
\begin{enumerate}
    \item It is a relaxation of the original problem, in the sense that any function $g$ which satisfies conditions 1--5 must be feasible to the LP, and
    \item The objective of the LP is provably an upper bound on the value of the expression in Theorem \ref{thm:main_bound} when evaluated on $g$. 
\end{enumerate}
Below, we explain how this modification works. 

First, recall that the LP in Section \ref{sec:lp} searched for $g$ by discretizing the unit square into an $n \times n$ grid, for some positive integer $n$, and instantiating variables for the values of $g$ on the discretized grid. Our modified LP also does this. Using the same notation as in Section \ref{sec:lp}, let $x_i = y_i = \frac1n$ for $i = 0, \ldots, n$. Then the variables in our LP will be $g(x_i, y_j)$. 

\textbf{Constraints.} The original LP had 5 sets of constraints, modelling the conditions 1--5. The modified LP takes these 5 sets of constraints, and changes two of them slightly to obtain a valid relaxation. Below are the constraints. Constraints 1, 2, and 5 are unchanged from the LP in Section \ref{sec:lp}, but we will write them again here for completeness. Constraints 3 and 4 are the ones that have been changed slightly. As we describe each constraint, we explain why it is a valid relaxation of the corresponding condition.
\begin{enumerate}
    \item \underline{$g(x, y): [0,1]^2 \to [0,1]$ and $g$ is continuous.} The corresponding LP constraints are $0 \leq g(x_i, y_j) \leq 1$, for all $i,j = 0, 1, \ldots, n$. This constraint is clearly a valid relaxation. 
    \item \underline{$g(x, y)$ is increasing in $x$ and decreasing in $y$.} The corresponding LP constraints are 
    \begin{itemize}
        \item $g(x_i, y_j) \leq g(x_k, y_j)$ for all $0 \leq i, j, k \leq n$ with $i \leq k$;
        \item $g(x_i, y_j) \geq g(x_i, y_l)$ for all $0 \leq i, j, l \leq n$ with $j \leq l$.
    \end{itemize} 
    This constraint is a valid relaxation, because it is checking monotonicity at only a finite set of points. 
    \item \underline{$\frac{\partial g(x, y)}{\partial x} \leq g(x, y)$}. We discretize this constraint to create the following LP constraint:
    $$\frac{g(x_{i+1}, y_j) - g(x_i, y_j)}{x_{i+1} - x_i} \leq g(x_{i+1}, y_j) \quad \text{for all $0 \leq i \leq n-1$, $0 \leq j \leq n$}$$
    To see why this constraint is a valid relaxation, we use the mean value theorem. The means value theorem tells us that
    $$
    \frac{g(x_{i+1}, y_j) - g(x_i, y_j)}{x_{i+1} - x_i} = \frac{\partial g(x, y_j)}{\partial x}
    $$
    for some $x \in [x_i, x_{i+1}]$. If $g$ satisfies conditions 1--5, then $\frac{\partial g(x, y_j)}{\partial x} \leq g(x, y_j) \leq g(x_{i+1}, y_j)$, where the first inequality is by condition 3 and the second inequality is by condition 2. Thus the discretized constraint is indeed a valid relaxation. 
    
    \begin{remark*}
    The LP in Section \ref{sec:lp} had instead the right-hand side of the constraint as $g(x_i, y_{j+1})$. Since $g(x_i, y_{j+1}) \geq g(x_{i+1}, y_j)$, the constraints of the modified LP are weaker. We do this to make the modified LP a valid relaxation. 
    \end{remark*}

    \item \underline{$\frac{\partial g(x, y)}{\partial y} \geq g(x, y) - 1$}. Similar to the previous constraint, the corresponding LP constraint is

    $$\frac{g(x_i, y_{j+1})-g(x_i, y_j)}{y_{j+1}-y_j} \geq g(x_{i}, y_{j+1}) - 1 \quad \text{for all $0 \leq i, j \leq n-1$}$$
    An argument similar to the one for the previous constraint shows that this one is a valid relaxation. 
    \item \underline{For all $x, y, y'$ with $y' > y$,
    $g(1, y) - g(x, y) \geq g(1, y') - g(x, y')$}. The corresponding LP constraints are
    $$g(x_n, y_j) - g(x_i, y_j) \geq g(x_n, y_l) - g(x_i, y_l) \quad \text{for all $0 \leq i,j,k \leq$ with $l > j$.}$$
    This is a valid relaxation, because it is checking condition 5 at only a finite set of points. 
\end{enumerate}
Now that we have described the constraints of the modified LP, and proved that any feasible function $g$ must satisfy them, we turn to describing the objective of the modified LP. 

\textbf{Objective}.
Recall that the LP in Section \ref{sec:lp_objective} approximated the objective in Theorem \ref{thm:main_bound} by 
\begin{enumerate}
    \item Replacing the $\min_{0\leq \gamma, \tau \leq 1}$ and $\min_{\theta \leq \gamma}$ expressions with minimums over a finite set of values, and 
    \item Replacing the integrals with finite sums. 
\end{enumerate}
For the modified LP, we also replace the min expressions with minimums over a finite set of values as before. Since this can only increase the objective, this is valid for obtaining an upper bound. To get an upper bound on the integrals, we replace them with right Riemann sums whenever the integrand is monotone increasing. It turns out that all but one of the integrals in the objective has a monotone increasing integrand. For the remaining integral (which is not necessarily monotone increasing or monotone decreasing), we replace it with a trapezoidal sum and bound the error using a Lipschitz continuity argument. 

More precisely, recall that the bound in Theorem \ref{thm:main_bound} is
$$\min_{0 \leq \gamma, \tau \leq 1} f(\gamma, \tau),$$
where
\begin{align*}
&f(\gamma, \tau) = (1-\tau)(1-\gamma) + (1-\tau)\int_0^\gamma g(x, \tau) dx \\ &+ 
\int_0^\tau \min_{\theta \leq \gamma} \left\{(1-g(\theta, y)) + \int_0^\theta g(x, y)dx + \int_\theta^\gamma g(x, \tau) dx\right\} dy 
\end{align*}
Below, we describe how we express the objective in the modified LP. We justify that each step can only increase the objective. 
\begin{itemize}
    \item Replace $\min_{0 \leq \gamma,\tau \leq 1} f(\gamma, \tau) $ by $\min_{(i,j) \in S} f(x_i, y_j)$, for some small finite subset $S \subset \{0, \ldots, n\}^2$. This can only increase the objective. The way we chose the subset $S$ is by observing, empirically, where the expression $\min_{0 \leq \gamma, \tau \leq 1} f(\gamma, \tau)$ attains its minimum. (We did this by  computing $\min_{0 \leq i,j \leq n} f(x_i, y_j)$ for several small values of $n$, and seeing at which points $(x_i, y_j)$ in the unit square the minimum was attained.) We then chose several of these points to put into $S$; in the end our set $S$ contained 7 points.\footnote{These points were $(0, 0), (1, 1), (0, 1), (1, 0), (\frac{23}{40}, \frac{27}{40}), (\frac12, \frac34), (\frac{13}{30}, \frac{23}{30})$.} The reason why we opted to make the minimum over such a small set $S$ (instead of a discretization of the entire unit square) is because this led to a substantially smaller LP while only increasing the objective value by a negligible amount. The full LP is very computationally expensive to solve, so this type of heuristic was necessary for obtaining an upper bound in a reasonable amount of computation time.  
    \item Replace $\min_{\theta \leq \gamma}$ by $\min_{\theta \in \{0, \gamma\}}$. Again, this can only increase the objective. We opted to take a minimum over just the two values $\{0, \gamma\}$, instead of a discretization of the interval $[0, \gamma]$, because this led to a substantially smaller LP while increasing the objective value by only a negligible amount. Empirically, we noticed that for the function $g$ returned by the LP, the $\min_{\theta \leq \gamma}$ was usually attained at $\theta = 0$ or $\theta = \gamma$.
    \item Replace the integral 
    $$\int_0^\tau \min_{\theta \in\{0, \gamma\}} \left\{(1-g(\theta, y)) + \int_0^\theta g(x, y)dx + \int_\theta^\gamma g(x, \tau) dx\right\} dy$$ with a trapezoidal sum. That is, if $h(y)$ is the integrand, then we replace the integral $\int_0^\tau h(y) dy$ with $\frac{1}{n} \sum_{j=0}^{l-1} \frac{h(y_{j+1}) + h(y_j))}{2}$, where $y_l = \tau$. This by itself may not result in an upper bound -- in particular, it is not clear if $h(y)$ is monotone in $y$. However, it cannot be too far from an upper bound, and we can quantify this using a Lipschitz argument.
    
    Indeed, note that $h(y)$ is 1-Lipschitz in $y$. (The argument for this is the same as one used in the proof of Lemma \ref{lem:approx2}, so we choose to not repeat it here.) Hence, by Lemma \ref{lem:lipschitz_integral} in Appendix \ref{app:lipschitz}, we have $\int_{y_j}^{y_{j+1}} h(y) dy \leq \frac{h(y_{j+1}) + h(y_j)}{2n} + \frac{1}{4n^2}$. (Equality is attained when $h(y_j) = h(y_{j+1})$, and $h$ is linear with slope 1 on the interval $[y_j, y_j + \frac{1}{2n}]$, and linear with slope $-1$ on the interval $[y_j + \frac{1}{2n}, y_{j+1}]$.) 
    
    Therefore, 
    $$\int_0^{y_l} h(y) dy \leq \frac{1}{n} \sum_{j=0}^{l-1} \frac{h(y_{j+1}) + h(y_j)}{2} + \frac{l}{4n^2} \leq \frac{1}{n}\sum_{j=0}^{l-1} \frac{h(y_{j+1}) + h(y_j)}{2} + \frac{1}{4n}.$$
    Thus, we can replace the integral $\int_0^{y_l} h(y) dy$ with the sum $\frac{1}{n}\sum_{j=0}^{l-1} \frac{h(y_{j+1}) + h(y_j)}{2} + \frac{1}{4n}$, and this will be a valid upper bound.

    \item Replace the three other integrals (which are over $x$) with right Riemann sums over the discretized grid. Since those three integrals all have a monotone increasing integrand (as $g(x, y)$ is increasing in $x$), it follows that replacing them with right sums can only increase the objective. For example, the integral $\int_0^{x_k} g(x, \tau) dx$ is replaced by the sum $\frac1n \sum_{i=1}^k g(x_i, \tau)$. 
    
\end{itemize}

\textbf{Putting together.} Putting everything together, the upper bound LP is as follows:
\begin{align*}
    \max \quad &t \\
    \text{s.t.} \quad &t \leq (1-x_k)(1-y_l) + (1-y_l)\cdot \frac1n \sum_{i=1}^k g(x_i, y_l)  + \frac{1}{n} \left(\sum_{j=0}^{l-1} \frac{\tilde{h}(y_{j+1}) + \tilde{h}(y_j)}{2}\right) + \frac{1}{4n}\quad \text{$\forall k,l \in [n]$} \\
    &\text{and $g$ satisfies the constraints described above.}
\end{align*}
In the above, $\tilde{h}$ is the function $h$ with the integrals replaced by right Riemann sums:
$$\tilde{h}(y_j) := \min_{x_i \in \{0, x_k\}} \left\{1 - g(x_i, y_j) +\frac1n \sum_{d=1}^i g(x_d, y_j) + \frac1n \sum_{d=i+1}^k g(x_d, y_l) \right \}$$
We solved the LP with $n = 210$, and obtained an objective value of 0.6688. This shows that \emph{any} function $g$ satisfying conditions 1--5 has value at most 0.6688 when plugged into the bound in Theorem \ref{thm:main_bound}. 

\end{proof}

\section{Conclusion}
\label{sec:conc}
Figure \ref{fig:comparison} compares the contour plot of the function $g$ we used, to the function $g(x, y) = \frac12(h(x) + 1 - h(y))$ used by Huang et al.\ (Here, $h(x) = \min(1, \frac12 e^x).)$ The plots look qualitatively quite different. One interesting question is to try to extrapolate a simple function $g$ from the LP contour plot, such that when $g$ is plugged into the bound in Theorem \ref{thm:main_bound}, it improves upon the competitive ratio in Huang et al.\ Visually, the LP contour plot suggests trying a piecewise-linear $g$ with two pieces, where one piece only depends on $y$. However, we can prove that no function in this class can improve upon the competitive ratio in Huang et al.\

As we have noted, because of our upper bound, the value of the competitive ratio cannot be improved by much without either weakening the assumptions on $g$ that we use in Theorem \ref{thm:main_bound} or improving the analysis of Huang et al.\ \cite{HTWZ19} that we use.  Huang et al.\ give a potentially stronger bound on the competitive ratio in the conclusion of their paper.  However, it was unclear to us how to express their stronger bound as a linear program on discretized values of $g$.

As discussed in Section \ref{sec:upper}, in order to compute the upper bound, we significantly relaxed the points at which the minimums of the function $f(\gamma, \tau)$ are taken, yet this did not change the value of the LP by much.  It seems possible that the Huang et al.\ analysis can be simplified to reflect this fact.

It would also be interesting to derive an improved upper bound for this problem.  To our knowledge, the best upper bound is the same as for the unweighted online bipartite matching problem with random arrivals, which is 0.823, and is due to Manshadi, Oveis Gharan, and Saberi \cite{MOS12}.

\begin{figure}[t!]
    \centering
    \begin{subfigure}[t]{0.45\textwidth}
        \centering
        \includegraphics[height=2in]{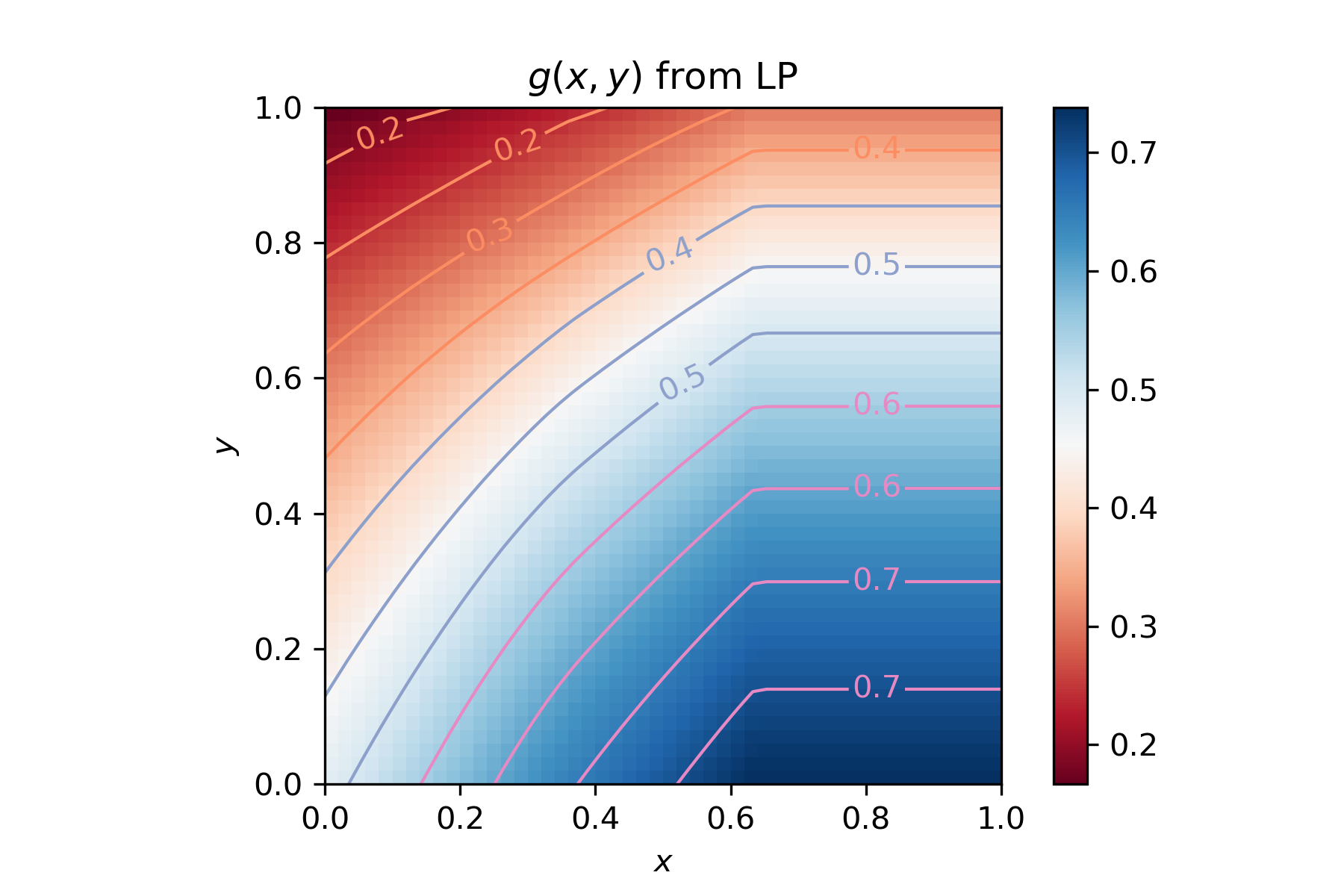}
        \caption{Contour plot of the function values obtained from solving the LP in Section \ref{sec:lp}. Here, we used a $50 \times 50$ discretization.}
    \end{subfigure}%
    ~ 
    \begin{subfigure}[t]{0.45\textwidth}
        \centering
        \includegraphics[height=2in]{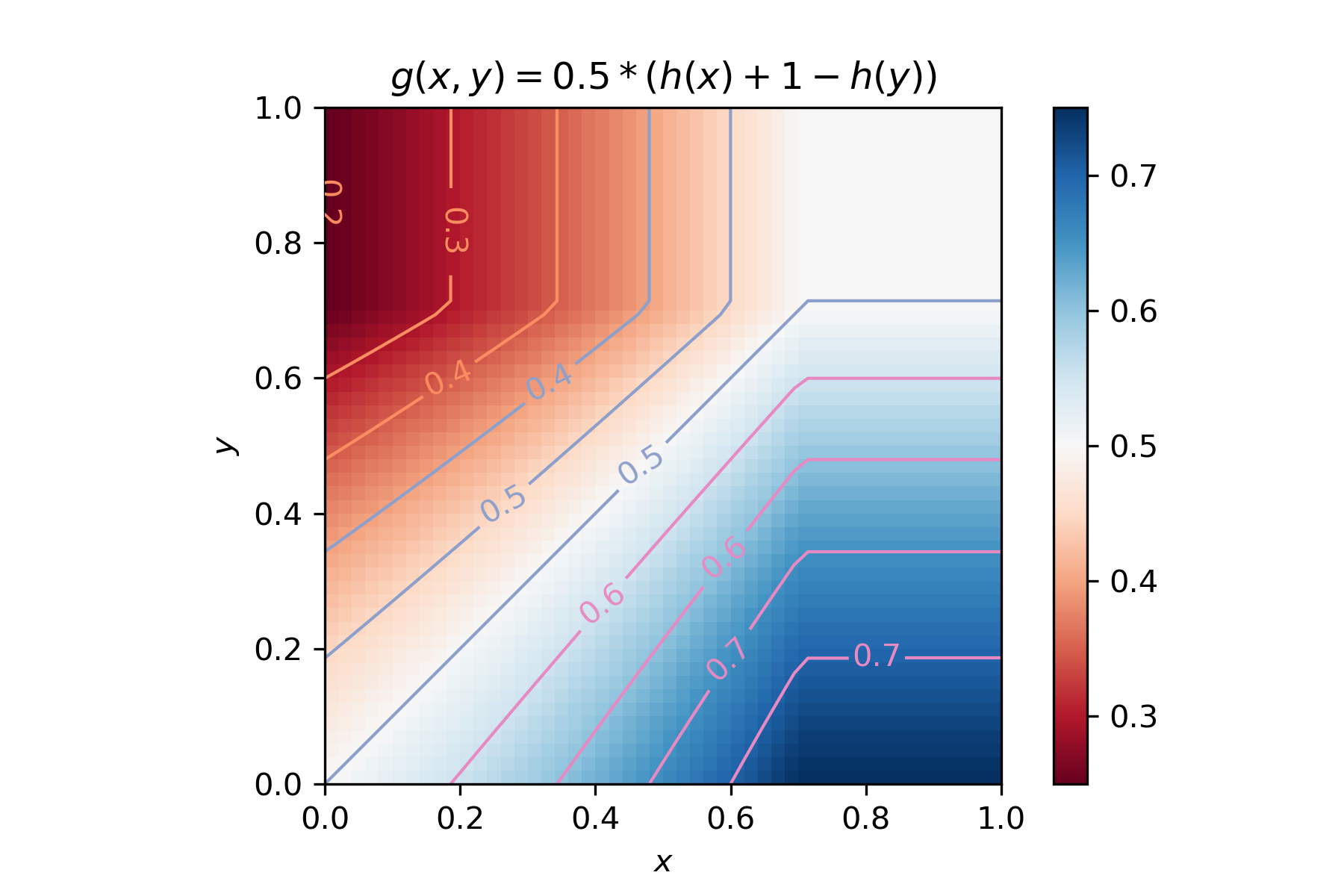}
        \caption{Contour plot of the function $g(x, y) = \frac12(h(x) + 1 - h(y))$, where $h(x) = \min(1, \frac12 e^x)$. This is the function used by Huang et al.}
    \end{subfigure}
    \caption{Side-by-side comparison of the function $g$ we used, versus the function $g$ used by Huang et al.}
    \label{fig:comparison}
\end{figure}

\bibliographystyle{alpha}
\bibliography{references}

\begin{thebibliography}{AGKM11}

\bibitem[AGKM11]{AGKM11}
Gagan Aggarwal, Gagan Goel, Chinmay Karande, and Aranyak Mehta.
\newblock Online vertex-weighted bipartite matching and single-bid budgeted
  allocations.
\newblock In {\em Proceedings of the Twenty-Second Annual ACM-SIAM Symposium on
  Discrete Algorithms}, SODA ’11, page 1253–1264, USA, 2011. Society for
  Industrial and Applied Mathematics.

\bibitem[BM08]{BM08}
Benjamin Birnbaum and Claire Mathieu.
\newblock On-line bipartite matching made simple.
\newblock {\em SIGACT News}, 39:81--87, 2008.

\bibitem[DJK13]{DJK13}
Nikhil~R. Devanur, Kamal Jain, and Robert~D. Kleinberg.
\newblock Randomized primal-dual analysis of ranking for online bipartite
  matching.
\newblock {\em Proceedings of the 24th Annual ACM-SIAM Symposium on Discrete
  Algorithms}, Jun 2013.

\bibitem[GM08]{GM08}
Gagan Goel and Aranyak Mehta.
\newblock Online budgeted matching in random input models with applications to
  adwords.
\newblock In {\em Proceedings of the Nineteenth Annual ACM-SIAM Symposium on
  Discrete Algorithms}, SODA ’08, page 982–991, USA, 2008. Society for
  Industrial and Applied Mathematics.

\bibitem[HTWZ19]{HTWZ19}
Zhiyi Huang, Zhihao~Gavin Tang, Xiaowei Wu, and Yuhao Zhang.
\newblock Online vertex-weighted bipartite matching: Beating $1-1/e$ with
  random arrivals.
\newblock {\em ACM Transactions on Algorithms}, 15(3), June 2019.

\bibitem[KMT11]{KMT11}
Chinmay Karande, Aranyak Mehta, and Pushkar Tripathi.
\newblock Online bipartite matching with unknown distributions.
\newblock In {\em Proceedings of the Forty-Third Annual ACM Symposium on Theory
  of Computing}, STOC ’11, page 587–596, 2011.

\bibitem[KVV90]{KVV90}
R.~M. Karp, U.~V. Vazirani, and V.~V. Vazirani.
\newblock An optimal algorithm for on-line bipartite matching.
\newblock {\em Proceedings of the 22nd Annual ACM Symposium on Theory of
  Computing - STOC 90}, 1990.

\bibitem[MOS12]{MOS12}
Vahideh~H. Manshadi, Shayan {Oveis Gharan}, and Amin Saberi.
\newblock Online stochastic matching: Online actions based on offline
  statistics.
\newblock {\em Mathematics of Operations Research}, 37:559--573, 2012.

\bibitem[MY11]{MY11}
Mohammad Mahdian and Qiqi Yan.
\newblock Online bipartite matching with random arrivals.
\newblock {\em Proceedings of the 43rd Annual ACM Symposium on Theory of
  Computing - STOC 11}, 2011.

\bibitem[Sch03]{Schrijver03}
Alexander Schrijver.
\newblock {\em Combinatorial Optimization: Polyhedra and Efficiency}.
\newblock Springer, 2003.

\end{thebibliography}

\appendix

\section{Approximating the Integral of a Lipschitz function}
\label{app:lipschitz}
In this section, we show that integrals of Lipschitz functions are well-approximated by trapezoidal sums. The below lemma is used twice in our previous proofs, once in the proof of Lemma \ref{lem:approx2}, and once in the proof of Theorem \ref{thm:upper}.
\begin{lemma}
\label{lem:lipschitz_integral}
Suppose $f: \R \to \R$ is $L$-Lipschitz. 
Let $n$ be a positive integer, and let $x_k = \frac{k}{n}$ for all integers $k$. Then for all $i < j$, we have
$$
\abs{\int_{x_i}^{x_j} f(x) dx - \frac{1}{2n}\sum_{k=i}^{j-1}\left(f(x_k) + f(x_{k+1})\right)} \leq \frac{L(j-i)}{4n^2}.
$$
\end{lemma}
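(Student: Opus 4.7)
The plan is to reduce the global estimate to the per-subinterval trapezoidal error bound, then sum up. By additivity of the integral and of the sum, we can write
\[
\int_{x_i}^{x_j} f(x)\,dx - \frac{1}{2n}\sum_{k=i}^{j-1}\bigl(f(x_k)+f(x_{k+1})\bigr) = \sum_{k=i}^{j-1}\left[\int_{x_k}^{x_{k+1}} f(x)\,dx - \frac{f(x_k)+f(x_{k+1})}{2n}\right].
\]
So by the triangle inequality it suffices to show that on a single subinterval $[a,b]$ of length $h=1/n$, the trapezoidal error obeys
\[
\left|\int_a^b f(x)\,dx - \frac{(b-a)(f(a)+f(b))}{2}\right| \leq \frac{L(b-a)^2}{4} = \frac{L}{4n^2},
\]
and then multiply by $j-i$.

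For the per-interval bound, I would rescale by setting $F(t) = f(a+th)$ for $t\in[0,1]$; note $F$ is $Lh$-Lipschitz, so the two Lipschitz inequalities
\[
F(t)-F(0) \leq Lht \quad\text{for } t\in[0,\tfrac12], \qquad F(t)-F(1) \leq Lh(1-t) \quad\text{for } t\in[\tfrac12,1]
\]
hold, together with their reversed counterparts. Integrating the first inequality over $[0,\tfrac12]$ gives $\int_0^{1/2} F(t)\,dt \leq \tfrac12 F(0) + Lh/8$, and integrating the second over $[\tfrac12,1]$ gives $\int_{1/2}^1 F(t)\,dt \leq \tfrac12 F(1) + Lh/8$. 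Summing yields
\[
\int_0^1 F(t)\,dt \leq \frac{F(0)+F(1)}{2} + \frac{Lh}{4},
\]
and the analogous lower bound follows from the reversed Lipschitz inequalities. Multiplying by $h=b-a$ converts this into the desired bound on $\int_a^b f(x)\,dx - \frac{(b-a)(f(a)+f(b))}{2}$.

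The only delicate step is arranging the Lipschitz inequalities so that the factor $1/4$ (rather than a naive $1/2$) pops out; the trick is to split $[0,1]$ at the midpoint and use the endpoint that is closer, so that $\int_0^{1/2} Lht\,dt + \int_{1/2}^1 Lh(1-t)\,dt = Lh/4$ instead of $Lh/2$. Nothing else requires cleverness: the reduction step is just linearity, and putting these together gives exactly the claimed bound $\frac{L(j-i)}{4n^2}$.
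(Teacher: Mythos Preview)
Your proof is correct and follows essentially the same approach as the paper: reduce to a single subinterval by linearity and the triangle inequality, then use the Lipschitz condition to trap $f$ between two tent-shaped linear bounds anchored at the endpoints. The only cosmetic difference is that the paper integrates the exact upper envelope $\min\{g_1,g_2\}$ (obtaining the sharper intermediate value $\tfrac14(1-(f(1)-f(0))^2)$ before discarding the squared term), whereas you split at the midpoint and bound each half against its nearer endpoint; both routes yield the same constant $\tfrac14$.
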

\begin{proof}
It suffices to show that
$$\abs{\int_{x_k}^{x_{k+1}} f(x) dx - \frac{f(x_k) + f(x_{k+1})}{2n}} \leq \frac{L}{4n^2}.$$
The statement in the lemma then follows from summing over $k$, and then applying the triangle inequality. 

For clarity of exposition, we will prove the following simpler claim. The proof of this claim generalizes easily to the more general statement above. 

\begin{claim}
Whenever $f$ is 1-Lipschitz,
$$\int_0^1 f(x) dx - \frac{f(0) + f(1)}{2} \leq   \frac{1}{4}$$
\end{claim}
\begin{proof}
Define $g_1(x)$ to be the linear function through $(0, f(0))$ with slope 1. Similarly, define $g_2(x)$ to be the linear function passing through $(1, f(1))$ with slope $-1$. Since $f$ is 1-Lipschitz, it follows that $f(x) \leq \min\{g_1(x), g_2(x)\}$ on the interval $[0, 1]$. It follows that 
$\int_0^1 f(x) dx \leq \int_0^1 \min\{g_1(x), g_2(x)\} dx.$ (For an illustration, see Figure \ref{fig:lip_int}.)

\begin{figure}[H]
\begin{center}
\begin{tikzpicture}
\vertex (a) at (-3, -3) [label = below left:{$(0, f(0))$}] {};
\vertex (b) at (4, -4) [label=below right:{$(1, f(1))$}] {};
\vertex (c) at (0, 0) {};
\draw[domain=-3:1, smooth, variable=\x, thick, blue] plot ({\x}, {\x});
 \draw[domain=-1:4, smooth, variable=\x, thick, blue]  plot ({\x}, {-\x});
 \draw[domain=-3:4, smooth, variable=\x, thick]  plot ({\x}, {-\x/7-3+-3/7});
 
 \node at (-1.75,-1) {\footnotesize $g_1(x)$};
 \node at (2.25, -1.5) {\footnotesize $g_2(x)$};
  \node at (0, -4) {\footnotesize $f(x)$};
  
\end{tikzpicture}
\caption{The 1-Lipschitz function $f$, together with its two upper bounds $g_1$ and $g_2$.}
\label{fig:lip_int}
\end{center}

\end{figure}
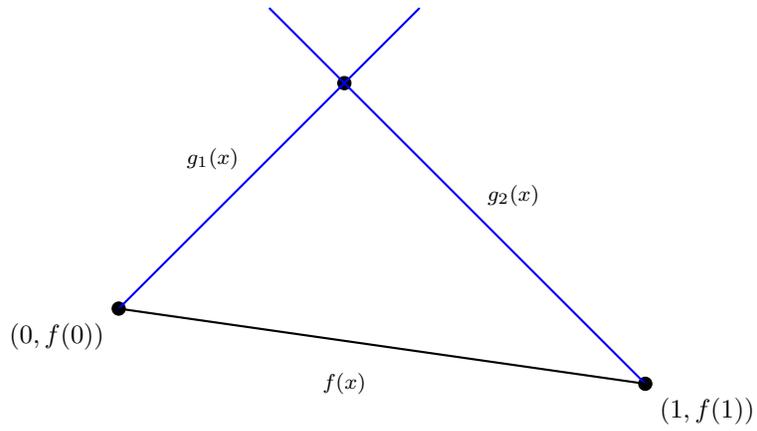

It is a straightforward exercise to calculate that
$$\int_0^1 \min\{g_1(x), g_2(x)\} dx - \frac{f(0) + f(1)}{2} = \frac14\cdot \left(1 - (f(1)-f(0))^2\right) \leq \frac14,$$
which completes the proof of the claim.
\end{proof}

\end{proof}

\end{document}